\newcommand{\grain}[1]{G[#1]}              
\newcommand{\ek}[1]{EK[#1]}                
\newcommand{\BC}[1]{BC[#1]}                
\newcommand{\eqg}{\equiv_g}                
\newcommand{\leg}{\leq_g}                  
\newcommand{\stg}{<_g}                     
\newcommand{\incg}{\langle\rangle_g}       
\newcommand{\gl}[1]{\varphi(#1)}              
\newcommand{\subt}{\subseteq_{typ}}        
\newcommand{\psub}{\subset_{typ}}          
\newcommand{\tun}{\cup_{typ}}              
\newcommand{\tin}{\cap_{typ}}              
\newcommand{\tdiff}{-_{typ}}               
\newcommand{\thra}{\twoheadrightarrow}     
\theoremstyle{plain}
\newtheorem{theorem}{Theorem}[section]
\newtheorem{corollary}[theorem]{Corollary}
\newtheorem{proposition}[theorem]{Proposition}
\theoremstyle{definition}
\newtheorem{definition}[theorem]{Definition}
\theoremstyle{remark}
\newtheorem{remark}[theorem]{Remark}
\title{Grain Theory: Type-Level Granularity Correctness in Data Pipelines}
\author{Nikos Karayannidis}
\affiliation{%
  \institution{Independent}
  \city{Athens}
  \country{Greece}
}
\email{nkarag@gmail.com}
\keywords{grain theory, data granularity, type-level verification, functional dependencies, data transformation correctness}
\begin{document}

\begin{abstract}
Data transformation correctness is a fundamental challenge in data
engineering: how can we systematically verify that pipelines produce
correct results before executing on production data? Traditional
approaches rely on expensive iterative testing, requiring data
materialization and manual reconciliation to discover errors. A very
common cause is the absence of formal methods for reasoning about
\emph{grain}---the level of detail or granularity of data---causing
transformations to inadvertently change granularity in unexpected
ways, producing incorrect results that include well-known pathologies
like fan traps (metrics duplication) and chasm traps (data loss).

We introduce \emph{grain theory}, a type-theoretic framework that
elevates grain---an informal concept from dimensional modeling---to a
universal, composable property of any algebraic data type. The theory develops
in two phases. The first is a \emph{denotation of data}: grain
itself, defined by irreducibility and isomorphism with no reference
to functional dependencies; three grain relations forming a bounded
lattice (whose lub and glb extend to any algebraic data type) with
sound and complete axioms that recover Armstrong's on product types;
the \emph{entity key} as a derived grain; and grain-determined
\emph{behavioral classes}---together the type-level triple
$(\grain{R},\,\ek{R},\,\BC{R})$ that captures the essential semantic
content of any algebraic data type.

The second is a \emph{denotation of transformations}: every
transformation~$h$ has a \emph{grain lift} $\gl{h}$, its grain-level
denotation. While grain, grain ordering, and the lattice operations
extend to any algebraic data type---grain commutes with every type
constructor, the inductive and coinductive fixed points
included---inference rules for specific transformation algebras
specialize to their type disciplines; we develop them for the most
common data-engineering setting, collections of product types under
the relational algebra, where we prove grain inference rules---most
notably a general \emph{equi-join grain inference theorem}---and
present \emph{CalcG}, a decidable algorithm that composes grain lifts
across a pipeline DAG. The central theorem of the paper, the
\emph{grain homomorphism}, ties the two phases together: grain
projection commutes with transformation, and grain lifts compose
($\gl{h_2 \circ h_1} = \gl{h_2} \circ \gl{h_1}$). The denotation
is therefore faithful, yielding a \emph{pipeline denotational
design homomorphism}: pipeline grain-correctness is verifiable at
design time, before any code is written or any query is executed.
As corollaries, fan traps emerge as grain-relation violations
detectable from the schema alone; chasm traps are localized to the
grain-ordering-chain pattern vulnerable to them; and
behavioral-class violations such as point-in-time queries on the
wrong collection type become compile-time type errors. All
theorems are mechanically verified in Lean~4.
\end{abstract}

\maketitle


\section{Introduction}
\label{sec:intro}

Consider a data engineer building a drill-across report that combines
revenue from one fact table with units sold from another, aggregated per
customer and date. The two fact tables have different \emph{grains}:
\texttt{SalesChannel} records one row per customer, channel, and date,
while \texttt{SalesProduct} records one row per customer, product, and date.
The natural implementation joins on the common dimensions and aggregates:

\begin{verbatim}
SELECT sc.customer_id, sc.date,
       SUM(sc.revenue)   AS total_revenue,
       SUM(sp.units_sold) AS total_units
FROM sales_channel sc
  JOIN sales_product sp
    ON sc.customer_id = sp.customer_id
   AND sc.date = sp.date
GROUP BY sc.customer_id, sc.date
\end{verbatim}

\noindent
This query compiles, executes, and produces the expected result grain
(customer $\times$ date). Yet it \emph{systematically inflates both
metrics}. The join on \texttt{customer\_id} and \texttt{date} creates a
cross product over the unmatched grain components---channels on one side,
products on the other---so every row from both inputs is duplicated before
aggregation. This is a \emph{fan trap}: the intermediate grain
(customer $\times$ channel $\times$ product $\times$ date) is strictly finer
than either input grain, inflating every aggregate. The bug is invisible to
schema validation, type checking, and unit tests on small data---with one
channel and one product per customer-date, the cross product is $1 \times 1$
and the numbers appear correct. Its root cause is a \emph{grain
inconsistency}---an unintended change in the level of detail at which data
is represented.

Grain-related errors---fan traps, chasm traps, incorrect aggregations---are
widespread because existing formalisms do not treat grain as a composable
property of arbitrary algebraic data types.
Kimball's dimensional modeling~\cite{kimball1996data,kimball2013dw}
introduced grain as the level of detail of a fact table, but Kimball's
grain is \emph{informal} (prose, not mathematics), \emph{narrow}
(fact tables only), and \emph{non-systematic} (no transformation rules).
Summarizability theory and multidimensional models formalize
granularity in specific contexts (OLAP dimensions,
Section~\ref{subsec:rw-dimensional}), but not as a property of
arbitrary algebraic data types with systematic rules---inference rules
across the full relational algebra---for how data transformations
change it.
Functional dependencies and Klug's propagation
rules~\cite{armstrong1974dependency,klug1980calculating} capture
attribute-level determinacy and propagate it through relational
expressions, but reason about schema decomposition---not pipeline
correctness---and are unknown to most data engineers. \emph{How can we formally define grain, infer how
every relational operation changes it, and verify transformation
correctness before any data is processed---entirely at the type level?}

We introduce \emph{grain theory}, a type-theoretic framework that
answers this question. The theory develops in two phases: the first
builds a semantic domain for data types---the \emph{denotation of
data}; the second proves that this denotation is \emph{faithful}
to actual data transformations, licensing design-time pipeline
verification.

\smallskip\noindent\textbf{Phase~1: A denotation of data
(Sections~\ref{sec:foundations}--\ref{sec:entity}).}
We define the \emph{grain} of a type by irreducibility and
isomorphism alone---no functional dependencies appear in the
definition (Section~\ref{sec:foundations}). The theory is
therefore self-contained at the type level and does not depend on
classical dependency theory; the FD correspondence on product
types (Section~\ref{subsec:rw-fd}) emerges as a consequence, not
a foundation. Grain extends compositionally to every algebraic data
type, commuting with each type constructor---products, sums, and the
inductive and coinductive fixed points; for infinite (streaming)
data, windowing recovers a finite, verifiable grain
(Section~\ref{sec:adt}). The three grain relations---equality~($\eqg$),
ordering~($\leg$), and incomparability~($\incg$)---form a bounded
lattice (whose lub and glb extend uniformly to any algebraic data
type, Section~\ref{sec:prelim}) with sound and complete
Armstrong-like axioms that recover Armstrong's classical
axiomatization on product types (Section~\ref{sec:relations}). Two further constructions emerge
from grain alone: the \emph{entity key}, a derived grain
($\ek{R} = \grain{E}$, with $\ek{R} \subt \grain{R}$), connecting
grain to entity-level semantics; and a \emph{behavioral class}
chosen from five core classes, each with grain-determined
read/write patterns that unify table types across data modeling
paradigms (Section~\ref{sec:entity}). The type-level triple
$(\grain{R},\,\ek{R},\,\BC{R})$---the \emph{denotation of
data}---captures the essential semantic content of any algebraic
data type and is computable directly from its type definition.

\smallskip\noindent\textbf{Phase~2: A faithful denotation of
transformations, and design-time verification
(Sections~\ref{sec:inference}--\ref{sec:errors}).}
Every data transformation $h : R_1 \to R_2$ admits a \emph{grain
lift} $\gl{h} : \grain{R_1} \to \grain{R_2}$---the semantic image
of $h$ in the denotation of data
(Section~\ref{sec:grain-factorization}). The inference machinery
for any specific transformation algebra is tied to its type
discipline; we develop it for the most common data-engineering
setting---collections of product types under the relational
algebra---where the inference rules of Section~\ref{sec:inference}
give an explicit formula for $\gl{h}$ for every RA operation, with
a general \emph{equi-join grain inference theorem} as the headline
inference result, and the \emph{CalcG} algorithm composes these
formulas decidably across a pipeline DAG
(Section~\ref{sec:calcg}).

The central theorem of the paper is the \emph{grain homomorphism}:
grain projection commutes with transformation
($\textit{grain} \circ h = \gl{h} \circ \textit{grain}$) and grain
lifts compose ($\gl{h_2 \circ h_1} = \gl{h_2} \circ \gl{h_1}$).
The denotation is therefore \emph{faithful}: reasoning about
$\gl{h}$ in the semantic domain transfers soundly to $h$ in any
implementation, and a pipeline's grain-level meaning is determined
entirely by the composition of its operations' lifts. This is the
\emph{pipeline denotational design homomorphism}: data engineers
can verify pipeline grain-correctness in the semantic domain at
design time---before any code is written or any query is
executed---and proceed to implementation with the assurance that
the pipeline is grain-correct by construction.

Grain-related bugs---from incorrect aggregations to non-upsert
writes against entity tables---share a common root in undefined or
imprecise grain, and grain theory exposes a spectrum of them at
the type level. Fan traps (metrics duplication) are characterized
as grain-relation violations and detectable from the schema alone;
chasm traps (data loss)---a data-instance failure---are localized
to grain ordering chains, the type-level structural pattern
vulnerable to them (Section~\ref{sec:errors}). Because read/write
patterns are grain-determined (Section~\ref{sec:entity}),
\emph{behavioral-class violations}---such as a point-in-time query
on an event collection or a non-upsert write to an entity
table---become type errors caught at compile time. All theorems
are mechanically verified in
Lean~4~\cite{demoura2021lean4}; mechanization details and an
empirical validation of the equi-join inference theorem against
PostgreSQL appear in Section~\ref{sec:verification}.

Section~\ref{sec:prelim} fixes notation. Phase~1 is developed in
Sections~\ref{sec:foundations}--\ref{sec:entity}, Phase~2 in
Sections~\ref{sec:inference}--\ref{sec:errors}.
Section~\ref{sec:related} surveys related work, and
Section~\ref{sec:conclusion} concludes with future directions.

\section{Preliminaries}
\label{sec:prelim}

We work in a type-theoretic setting where data types are sets and values are
their inhabitants.
A \emph{record type} (or \emph{product type})
$R = R_1 \times R_2 \times \cdots \times R_n$ bundles component types into
a single structure (ordering of component types does not matter). We follow functional notation for application:
$f\ a\ b$ rather than $f(a,b)$. Throughout, we assume semantically rich,
domain-driven types (e.g.,
$\textit{Customer} = \textit{CustomerId} \times \textit{CustomerName}
\times \textit{Email}$) rather than uninterpreted base types.
We use \emph{element} (or \emph{inhabitant}) for a value of a type in formal
contexts, and \emph{row} or \emph{record} when appealing to relational
intuition; the terms are interchangeable.

Product types can be viewed as sets of field types, enabling type-level set
operations.

\begin{definition}[Type-Level Subset]
\label{def:type-subset}
Given types $A$ and $B$, we write $A \subt B$ if there exists a
surjective function $p : B \thra A$. The witness $p$ is of one of two
kinds. It is \emph{structural} when every component of $A$ appears
among the components of $B$ and $p$ is the canonical projection; this
case is decidable from the schema alone. Otherwise $p$ is
\emph{declared}---a surjection asserted by domain knowledge, such as
a foreign key---and $A$ need not be a sub-product of $B$.
\end{definition}

\begin{definition}[Type-Level Proper Subset]
\label{def:type-proper-subset}
We write $A \psub B$ if $A \subt B$ and $\neg(B \subt A)$.
\end{definition}

\noindent
For example, if $\textit{CustomerProfile} = \textit{CustomerId}\;\times\;
\textit{CustomerName}$, then $\textit{CustomerProfile} \subt
\textit{Customer}$ with the canonical projection as a structural
witness. A declared witness arises when, say, every employee belongs
to exactly one department and every department has at least one
employee: the foreign key $\textit{EmployeeId} \thra
\textit{DepartmentId}$ is a surjection, so
$\textit{DepartmentId} \subt \textit{EmployeeId}$---even though
$\textit{DepartmentId}$ is not a component of $\textit{EmployeeId}$.
We define the \emph{type-level union}
$A \tun B$ as the type containing all fields of $A$ and $B$ (common fields
appearing once), the \emph{intersection} $A \tin B$ as the type containing
only fields common to both, and the \emph{difference} $A \tdiff B$ as the
fields in $A$ but not in~$B$.

\smallskip\noindent\textbf{Natural extension to any algebraic data type.}\quad
The formulas above are the product-type realization. The relation
$\subt$ is already universal by Definition~\ref{def:type-subset}
(surjection existence). The operations $\tin$ and $\tun$ have
universal characterizations as the lub and glb of the grain lattice,
developed in
Section~\ref{subsec:incomparability-lattice}. The operation
$\tdiff$ is product-specific and is used only in the
relational-algebra inference rules of
Section~\ref{sec:inference}, which restricts to product types
accordingly.

A function $f : R_1 \to R_2$ is an \emph{isomorphism}, denoted
$f : R_1 \stackrel{\cong}{\longrightarrow} R_2$, if there exists an inverse
$g : R_2 \to R_1$ such that $f \circ g = id_{R_2}$ and
$g \circ f = id_{R_1}$. We write $R_1 \cong R_2$ when such an isomorphism
exists.

\begin{table}[t]
\caption{Key notation used throughout the paper.}
\label{tab:notation}
\centering
\small
\begin{tabular}{@{}ll@{}}
\toprule
\textbf{Symbol} & \textbf{Meaning} \\
\midrule
$A \subt B$      & Type-level subset (Def.~\ref{def:type-subset}) \\
$A \psub B$      & Proper type-level subset (Def.~\ref{def:type-proper-subset}) \\
$A \tun B$       & Type-level union \\
$A \tin B$       & Type-level intersection \\
$A \tdiff B$     & Type-level difference \\
$R_1 \cong R_2$  & Type isomorphism \\
$\grain{R}$      & Grain of type $R$ (Def.~\ref{def:grain}) \\
$\ek{R}$         & Entity key of type $R$ (Section~\ref{sec:entity}) \\
$\eqg$           & Grain equality (Section~\ref{sec:relations}) \\
$\leg$           & Grain ordering (Section~\ref{sec:relations}) \\
$\incg$          & Grain incomparability (Section~\ref{sec:relations}) \\
\bottomrule
\end{tabular}
\end{table}

\section{Grain Theory: Definitions and Fundamental Theorems}
\label{sec:foundations}

\subsection{Definition of Grain}
\label{subsec:grain-def}

The \emph{grain} of a data type captures its \emph{level of detail}: the
minimum information needed to uniquely identify each element. Intuitively,
the grain of $R$ is the simplest type isomorphic to $R$---carrying only the
essential identifying structure while discarding redundant fields.

\begin{definition}[Grain]
\label{def:grain}
Given a data type $R$, a type $G$ is a \emph{grain} of $R$ if it
satisfies:
\begin{enumerate}
\item \textbf{Isomorphism.} $G$ and $R$ are isomorphic:
  there exists an isomorphism
  $f_g : G \stackrel{\cong}{\longrightarrow} R$,
  called the \emph{grain function}.
\item \textbf{Irreducibility.} No proper type-level subset of $G$
  suffices as a grain of $R$: $\neg\exists\, G' \psub G$ such that
  $G' \cong R$.
\end{enumerate}
\end{definition}

\smallskip\noindent\textbf{Notation.}\quad
Definition~\ref{def:grain} is a \emph{relation} between types:
``$G$ is a grain of~$R$'' holds when $G$ satisfies
clauses~(1)--(2). By Theorem~\ref{thm:multiple-grains} any two
grains of~$R$ are isomorphic, so the relation fixes the grain
\emph{up to isomorphism}. We write $\grain{R}$ for a grain type
of~$R$, read parametrically: a claim about $\grain{R}$ is understood
to hold for every type that is a grain of~$R$; since these all agree
up to isomorphism, every result below is independent of the choice. We write $\grain{R} = T$ (equivalently
$T = \grain{R}$) as shorthand for the assertion that the concrete
type~$T$ \emph{is a grain of}~$R$. This is strictly stronger than
the grain-equality $T \eqg R$ of Section~\ref{sec:relations}, which
requires only $T \cong R$: thus $\grain{R} = T$ implies $T \eqg R$,
but not conversely---a gap the equi-join inference of
Section~\ref{sec:inference} turns on. Finally, $\eqg$ and $\leg$ already compare grains, so
$A \eqg \grain{B}$ means the same as $A \eqg B$ (likewise for
$\leg$); we write the shorter form.

\noindent
The inverse $f_g^{-1}$, written $\textit{grain} : R \to \grain{R}$, is the
\emph{grain projection} that extracts the identifying component from each
element of~$R$.

\smallskip\noindent\textbf{Example.}\quad
Let $\textit{Customer} = \textit{CustomerId}\;\times\;\textit{CustomerName}
\; \times \; \textit{Email}$. Then $\grain{\textit{Customer}} = \textit{CustomerId}$:
each customer identifier uniquely determines a full customer record. This is
an \emph{internal grain}---the grain fields are part of the type. Grain can
also be \emph{external}: for a data collection type $\textit{Coll}\ \textit{MonthlyBalance}$, 
the grain $\textit{BalanceDate}$ is not a field of the collection type but
identifies each snapshot of monthly balances uniquely.
In both cases, irreducibility is a property of $\grain{R}$ itself: there
is no proper sub-type $G' \psub \grain{R}$ isomorphic to~$R$. When
$\grain{R} \subt R$ (internal grain), this means no smaller subset of
$R$'s fields suffices as grain. When $\grain{R} \not\subt R$ (external
grain), irreducibility still applies---there is no smaller type
than~$\grain{R}$ that determines~$R$. Since $\subt$ is defined by
surjection existence (Definition~\ref{def:type-subset}), the Grain
Subset--Ordering Equivalence (Theorem~\ref{thm:grain-subset}) applies
uniformly to both internal and external grains.

\begin{remark}[Grain as Irreducible Core]
\label{rem:grain-prime}
The grain $\grain{R}$ is the \emph{canonical minimal representative} of the
isomorphism class $[R] = \{T : T \cong R\}$---the \emph{atom} of data types.
No proper subtype of~$\grain{R}$ is isomorphic to~$R$, so $\grain{R}$ is
the simplest type carrying the full identifying structure.
Just as atoms are the irreducible units from which matter is
composed, the grain is the irreducible core upon which a type's
information is built.
\end{remark}

This irreducibility has a striking consequence: the grain of the
co-domain determines any function targeting it.

\begin{proposition}[Grain Factorization]
\label{prop:grain-factorization-codomain}
For any function $h : R_1 \to R_2$, $h$ factors through the grain of
the co-domain: there exists $e : R_1 \to \grain{R_2}$ such that
$h = f_{g_{R_2}} \circ e$.
\end{proposition}

\begin{proof}
Since $f_{g_{R_2}} : \grain{R_2} \stackrel{\cong}{\to} R_2$ is an
isomorphism (Definition~\ref{def:grain}), set
$e = f_{g_{R_2}}^{-1} \circ h$. Then
$f_{g_{R_2}} \circ e = f_{g_{R_2}} \circ f_{g_{R_2}}^{-1} \circ h
= h$. 
\end{proof}

\noindent
The factorization means that every transformation into~$R_2$---regardless
of the domain~$R_1$---is fully determined by a function into $\grain{R_2}$.
All information flowing into a type is channeled through its grain.
Section~\ref{subsec:grain-factorization} refines~$e$ further using the
\emph{grain lift}, establishing that grain projection commutes with
transformation.

\subsection{Fundamental Theorems}
\label{subsec:fund-theorems}

All proofs appear in Appendix~\ref{sec:proofs}.

\begin{theorem}[Multiple Grains Isomorphism]
\label{thm:multiple-grains}
If $G_1$ and $G_2$ are both grains of $R$, then $G_1 \cong G_2$.
\end{theorem}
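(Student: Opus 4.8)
The plan is to reduce the statement to the transitivity of set isomorphism, invoking only the first (isomorphism) clause of Definition~\ref{def:grain}; the irreducibility clause turns out to be unnecessary. Since $G_1$ is a grain of $R$, the definition supplies a grain function $f_1 : G_1 \stackrel{\cong}{\longrightarrow} R$ with inverse $f_1^{-1} : R \rightarrow G_1$ satisfying $f_1 \circ f_1^{-1} = id_R$ and $f_1^{-1} \circ f_1 = id_{G_1}$. Likewise, since $G_2$ is a grain of $R$, we obtain $f_2 : G_2 \stackrel{\cong}{\longrightarrow} R$ with inverse $f_2^{-1} : R \rightarrow G_2$. Both grains are thus isomorphic to the common type $R$, and the goal is to splice these two isomorphisms together through $R$.

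First I would define the candidate isomorphism $h : G_1 \rightarrow G_2$ as the composite $h = f_2^{-1} \circ f_1$, with proposed inverse $h' = f_1^{-1} \circ f_2 : G_2 \rightarrow G_1$. The verification is a routine cancellation of the two inner isomorphisms: $h' \circ h = f_1^{-1} \circ (f_2 \circ f_2^{-1}) \circ f_1 = f_1^{-1} \circ id_R \circ f_1 = id_{G_1}$, and symmetrically $h \circ h' = f_2^{-1} \circ (f_1 \circ f_1^{-1}) \circ f_2 = id_{G_2}$. Hence $h$ witnesses $G_1 \stackrel{\cong}{\longrightarrow} G_2$, and therefore $G_1 \cong G_2$, as required.

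There is no genuine obstacle here: the entire content is that isomorphism of sets is an equivalence relation, with the shared target $R$ serving as the bridge. The one point worth flagging is that irreducibility plays no role whatsoever in the argument---uniqueness up to isomorphism follows from the isomorphism clause alone. This is precisely what the categorical reading of Definition~\ref{def:grain} predicts: every type in the isomorphism class $[R] = \{T : T \cong R\}$ is canonically related, so any two minimal representatives must themselves be isomorphic. If the formal development prefers to package this cleanly, one could instead establish symmetry and transitivity of $\cong$ as standalone lemmas and then apply them; but the explicit composite $f_2^{-1} \circ f_1$ is the most direct route and is the construction I would use in the Lean 4 formalization, since it makes the inverse $h'$ and the two cancellation identities immediately available to the proof assistant.
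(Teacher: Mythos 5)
Your proposal is correct and is essentially identical to the paper's own proof: both construct the isomorphism $G_1 \to G_2$ as the composite $f_{g2}^{-1} \circ f_{g1}$ through the common target $R$ and verify the two cancellation identities. Your added observation that irreducibility is not needed is accurate but does not change the argument.
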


\noindent\textbf{Example.}\quad
A $\textit{WeatherObservation}$ type may admit two grains:
$G_1 = \textit{ObservationId}$ and
$G_2 = \textit{ObservationDateTime} \times \textit{WeatherStationId}$.
Theorem~\ref{thm:multiple-grains} guarantees $G_1 \cong G_2$: observation
identifiers correspond one-to-one with (datetime, station) pairs. Essentially, they are interchangeable.
The converse does not hold: $G_2 \cong G_1$ alone does not make $G_2$ a
grain of~$R$; irreducibility must also be verified independently.

\begin{theorem}[Grain Uniqueness]
\label{thm:grain-uniqueness}
For any $r_1, r_2 : R$, if
$\textit{grain}\ r_1 = \textit{grain}\ r_2$ then $r_1 = r_2$.
That is, the grain projection is injective.
\end{theorem}

\begin{remark}[Grain vs.\ Primary Key]
\label{rem:grain-vs-pk}
Grain Uniqueness means that the grain projection is injective on any
data collection of type~$R$: the grain fields uniquely identify every
element. In relational terms, $\grain{R}$ plays the role of a
superkey---but it is a \emph{type-level} property, fixed by the type
definition, not a collection-level constraint. A primary
key, by contrast, is a \emph{collection-level} property---it may
change after operations such as filtering that do not change the
type. After filtering a collection of~$R$, the grain of~$R$ remains
the same (the type has not changed), but the collection's minimal
key may become a proper subset of the grain.
\end{remark}

\begin{theorem}[Grain of a Grain / Grain Operator Idempotency]
\label{thm:grain-idempotent}
$\grain{\grain{R}} = \grain{R}$.
\end{theorem}

\begin{theorem}[Grain of Product Types]
\label{thm:grain-product}
$\grain{R_1 \times R_2 \times \cdots \times R_n}
  = \grain{R_1} \times \grain{R_2} \times \cdots \times \grain{R_n}$.
\end{theorem}

\begin{theorem}[Grain of Sum Types]
\label{thm:grain-sum}
$\grain{R_1 + R_2 + \cdots + R_n}
  = \grain{R_1} + \grain{R_2} + \cdots + \grain{R_n}$.
\end{theorem}

\noindent\textbf{Example.}\quad
For $\textit{Vehicle} = \textit{Car} + \textit{Motorcycle}$, we obtain
$\grain{\textit{Vehicle}} = \grain{\textit{Car}} + \grain{\textit{Motorcycle}}$.

\subsection{Grain Determines Data Behavioral Semantics}
\label{subsec:grain-semantics}

The same type with different grain declarations yields fundamentally
different semantics. Consider $\textit{Customer} = \textit{CustomerId}
\times \textit{Name} \times \textit{Email} \times \textit{Address}
\times \textit{EffectiveFrom} \times \textit{CreatedOn}$:
\begin{itemize}
\item $\grain{\textit{Customer}} = \textit{CustomerId}$
  \,---\, \emph{entity}: each row corresponds to a customer's current state.
\item $\grain{\textit{Customer}} = \textit{CustomerId} \times
  \textit{EffectiveFrom}$ \,---\, \emph{versioned}: each row corresponds to a time-stamped version of a customer's state.
\item $\grain{\textit{Customer}} = \textit{CustomerId} \times
  \textit{Address} \times \textit{EffectiveFrom}$ \,---\,
  \emph{versioned (compound entity)}: each row corresponds to a version
  of a customer-address pair, allowing multiple addresses per customer
  and independent version histories per pair.
\item $\grain{\textit{Customer}} = \textit{CustomerId} \times
  \textit{CreatedOn}$ \,---\, \emph{event}: each row corresponds to a customer creation event at a given date.
\end{itemize}

\noindent
This illustrates that without a precise grain declaration, the data
cannot be interpreted correctly: the same table admits fundamentally
different read and write semantics depending on its grain. In practice,
data engineers resolve this ambiguity through guesswork---inferring grain
from column names, sample data, or tribal knowledge---a fragile process
that is the root cause of the transformation errors discussed in
Section~\ref{sec:intro}. We formalize the grain-to-semantics
correspondence in Section~\ref{sec:entity}.

\section{Grain for Arbitrary Algebraic Data Types}
\label{sec:adt}

Definition~\ref{def:grain} fixes the grain of a type by isomorphism and
irreducibility alone, with no appeal to the type's shape. We now show
that grain extends, compositionally, from base types to \emph{every}
algebraic data type---products, sums, inductive types, coinductive
types, and any nesting of these---and that for infinite (coinductive)
data the verification-relevant grain is recovered, in finite form, by
windowing.

An algebraic data type is built from base types by the grammar
\[
  T \;::=\; B \;\mid\; X \;\mid\; T \times T \;\mid\; T + T
            \;\mid\; \mu X.\,T \;\mid\; \nu X.\,T,
\]
with $B$ a base type and $X$ a type variable. Each binder $\mu X.\,T$
and $\nu X.\,T$ binds a strictly positive~$X$, so its body~$F$ denotes
a polynomial functor;\footnote{A \emph{polynomial functor} is a type
expression with a free variable, built from constants, $\times$,
and~$+$, that also acts on functions; its least and greatest fixed
points---the \emph{initial algebra} $\mu X.\,F$ and \emph{final
coalgebra} $\nu X.\,F$---are the inductive and coinductive types it
generates~\cite{birddemoor1997}.} $\mu$ takes its least fixed point
and $\nu$ its greatest. Products and sums give records and variants; $\mu$ gives the
inductive types (finite lists $\mu X.\,\mathbf{1} + R \times X$, trees);
$\nu$ gives the coinductive types (streams $\nu X.\,R \times X$).

\subsection{The Compositional Grain}
\label{subsec:compositional-grain}

\begin{definition}[Compositional grain]
\label{def:compositional-grain}
The grain operator extends from base types to every algebraic data
type by recursion on type structure:
\begin{align*}
  \grain{B} &= \text{the grain of the base type } B
              \;(\text{Definition~\ref{def:grain}}), &
  \grain{X} &= X, \\
  \grain{A \times B} &= \grain{A} \times \grain{B}, &
  \grain{A + B} &= \grain{A} + \grain{B}, \\
  \grain{\mu X.\,F} &= \mu X.\,\grain{F}, &
  \grain{\nu X.\,F} &= \nu X.\,\grain{F},
\end{align*}
where $\grain{F}$ applies the grain operator to the parameter
sub-expressions of~$F$ and leaves the bound recursion variable fixed.
The operator is total on every algebraic data type.
\end{definition}

\begin{theorem}[Grain commutes with every constructor]
\label{thm:grain-adt}
For every closed algebraic data type~$T$, $\grain{T} \cong T$, and the
grain operator is idempotent: $\grain{\grain{T}} = \grain{T}$.
\end{theorem}

\begin{proof}
Intuitively, the grain operator acts slot by slot, and because it
never touches the recursion variable ($\grain{X}=X$) the constructors
$\mu$ and $\nu$ carry the isomorphism through unchanged; we prove this
by induction on the structure of~$T$. The base, product, and sum
cases are Definition~\ref{def:grain} and
Theorems~\ref{thm:grain-product} and~\ref{thm:grain-sum}. For
$\mu X.\,F$, let $\hat F$ and $\hat G$ denote the polynomial functors
$Y \mapsto F[Y/X]$ and $Y \mapsto \grain{F}[Y/X]$, so that
$\mu X.\,F = \mu\hat F$ and $\grain{\mu X.\,F} = \mu\hat G$. By the
induction hypothesis the grain isomorphisms hold on every parameter
sub-expression of~$F$; since $\grain{X}=X$ leaves the recursion
variable untouched and $\times$, $+$ preserve isomorphism and
naturality, these assemble into a \emph{natural} isomorphism of
functors $\hat F \cong \hat G$. Naturally isomorphic polynomial
functors have isomorphic initial algebras~\cite{birddemoor1997}, hence
$\mu X.\,F \cong \mu X.\,\grain{F} = \grain{\mu X.\,F}$. The
coinductive case is identical, with final coalgebras in place of
initial algebras and $\nu$ in place of~$\mu$. Idempotency follows
because every base position of $\grain{T}$ already carries a grain
and $\grain{\grain{B}} = \grain{B}$
(Theorem~\ref{thm:grain-idempotent}).
\end{proof}

\begin{remark}[Irreducibility for infinite types]
\label{rem:structural-irreducibility}
For inductive and especially coinductive types the carrier-minimal
reading of irreducibility degenerates: under bare bijection every
countably infinite carrier is isomorphic
($\mathbb{N} \cong \mathbb{N} \times \mathbb{N} \cong
\textit{List}\,\mathbb{N}$), so no carrier-minimal representative
exists. For algebraic data types irreducibility is therefore the
\emph{structural} property that $\grain{T}$ is a fixed point of the
grain operator (Theorem~\ref{thm:grain-adt})---no constructor position
admits further grain reduction. On finite types this coincides with
Definition~\ref{def:grain}; on infinite types it is its well-defined
replacement.
\end{remark}

\subsection{Coinductive Data and CalcG as a General Schema}
\label{subsec:adt-calcg}

For a coinductive type Theorem~\ref{thm:grain-adt} yields an
\emph{infinite} grain. A stream is
$\textit{Stream}\,R = \nu X.\,R \times X$, so
$\grain{\textit{Stream}\,R} = \nu X.\,\grain{R} \times X
= \textit{Stream}\,\grain{R}$, whose inhabitants are infinite
sequences. The bare coinductive datatype has, in fact, \emph{no finite
grain}: a grain is isomorphic to its type, and no finite type is
isomorphic to an infinite stream. The grain is nonetheless finitely
\emph{presented}---as a type expression, $\textit{Stream}\,\grain{R}$
is the finite syntax $\nu X.\,\grain{R} \times X$.

This is the point at which to state precisely what CalcG
(Section~\ref{sec:calcg}) requires. CalcG is a topological traversal
of a finite pipeline DAG that, at each vertex, applies a per-operation
\emph{grain-inference function} to the grains of the vertex's
predecessors. It needs only that this function be \emph{computable}
and the DAG finite. Grains are, in general, algebraic-data-type
expressions---finite syntax, $\mu$ and $\nu$ included---so an infinite
grain is no obstacle to CalcG \emph{as such}. The finite field sets
and the rules of Table~\ref{tab:ra-inference} are the instantiation of
grain inference for the \emph{relational algebra}; they are a property
of that operation set, not of CalcG. What an unbounded stream lacks is
not a representable grain but a settled, computable grain-inference
function for raw stream operators---which is exactly what windowing
supplies.

\subsection{Windowed and Partitioned Collections}
\label{subsec:windowed-collections}

A stream acquires a finite, verifiable grain the moment it is
\emph{windowed} or \emph{partitioned}---the same move a stream
processor~\cite{carbone2015flink,akidau2015dataflow,armbrust2018structured}
must perform to evaluate an unbounded input at all.

\begin{definition}[Partitioned collection]
\label{def:partitioned-collection}
Let $R$ be an element type and $K \subt \grain{R}$ a grain-irreducible
type-level subset of its grain, the \emph{partition key}. The $K$-partitioned collection
$\textit{Coll}_K\,R$ is the type whose inhabitants are the
\emph{complete fibers of the $K$-projection}: one inhabitant per
$k : K$, holding every $R$-element whose $K$-projection is~$k$. Each
inhabitant is determined by, and unique to, its key, so
$\textit{Coll}_K\,R \cong K$.
\end{definition}

\begin{proposition}[Grain of a partitioned collection]
\label{prop:partitioned-grain}
$\grain{\textit{Coll}_K\,R} = K$, a finite product type. Writing
$V = \grain{R} \tdiff K$, the element grain decomposes as
$\grain{R} = K \times V$: the key~$K$ is shared by all elements within
one collection inhabitant, and $V$ identifies an element inside its
fiber. Because $K$ is a finite product type, CalcG applies to
$\textit{Coll}_K\,R$ exactly as to a relational collection---whether
each fiber is a finite list or an unbounded stream.
\end{proposition}

\begin{proof}
$\textit{Coll}_K\,R \cong K$ by
Definition~\ref{def:partitioned-collection}, and $K$ is
grain-irreducible by that definition; by Definition~\ref{def:grain},
$\grain{\textit{Coll}_K\,R} = K$. The decomposition
$\grain{R} = K \times V$ is the type-difference split of the product
$\grain{R}$ into~$K$ and its complement $V = \grain{R} \tdiff K$.
\end{proof}

\noindent
The partition key~$K$ is a \emph{declared} choice---part of the
collection's specification, not a function of~$R$---and is typically
fixed by the collection's behavioral class (Section~\ref{sec:entity});
the completeness of the fibers is precisely the coverage invariant of
the \textbf{IsSnapshot} class.

\smallskip
Three data-engineering collections show the construction at work.
\emph{A periodic snapshot.} For $\textit{ProductBalance}$, one record
per product per date, $\grain{\textit{ProductBalance}}
= \textit{ProductId} \times \textit{BalanceDate}$. The snapshot
collection partitions by $K = \textit{BalanceDate}$: one inhabitant
per date, holding that date's complete set of product balances, so
$\grain{\textit{Coll}} = \textit{BalanceDate}$ and
$V = \textit{ProductId}$ locates a balance within a snapshot.
\emph{A change log.} For $\textit{CDC-Event}$,
$\grain{\textit{CDC-Event}} = \textit{EntityId} \times
\textit{ChangeSeqNo}$. The log partitions by $K = \textit{EntityId}$,
one inhabitant per entity---the ordered sequence of its changes, a
Kafka topic partition. Each fiber is an \emph{unbounded} stream, yet
$\grain{\textit{Coll}} = \textit{EntityId}$ is finite: the grain is
the partition key, not the fiber content. \emph{A sensor stream.} For
$\textit{WeatherObservation}$, $\grain{\textit{WeatherObservation}} =
\textit{StationId} \times \textit{ObservationDtm}$; the bare
$\textit{Stream}\,\textit{WeatherObservation}$ has only the infinite
grain of Section~\ref{subsec:adt-calcg}. Different windowings yield
different finite grains, each its own collection type: by station,
$K = \textit{StationId}$; by day, $K = \textit{ObservationDate}$
(with $\textit{ObservationDate} \subt \textit{ObservationDtm}$ via
timestamp truncation); by per-station tumbling window,
$K = \textit{StationId} \times \textit{WindowId}$.

\smallskip
Grain theory thus covers every algebraic data type. For finite
data---products, sums, inductive collections---the grain is finite and
CalcG verifies grain-correctness directly from the schema. For
infinite data the bare coinductive grain is infinite, but every
windowed or partitioned collection has the finite grain~$K$ and is
verified as an ordinary finite collection. A streaming pipeline is
therefore grain-correct at design time exactly when each of its
windowed stages is---and a windowed stage is the only form in which a
stream processor ever evaluates a stream.

\section{Grain Relations}
\label{sec:relations}

Having defined grain as a type-level property with a categorical
interpretation, we now establish three fundamental relations on grains ---
equality, ordering, and incomparability --- and show they form a bounded
lattice on the space of data types.

\subsection{Grain Equality}
\label{subsec:grain-equality}

\begin{definition}[Grain Equality]
\label{def:grain-eq}
Given two data types $R_1$ and $R_2$, we say they have \emph{equal grain}
and write $R_1 \eqg R_2$ if and only if there exists an isomorphism
$f : \grain{R_1} \stackrel{\cong}{\longrightarrow} \grain{R_2}$ between
their grains.
\end{definition}

\begin{theorem}[Grain Equality Theorem]
\label{thm:grain-equality}
$R_1 \eqg R_2$ if and only if $R_1$ and $R_2$ are isomorphic:
$R_1 \eqg R_2 \;\Leftrightarrow\; R_1 \cong R_2$.
\end{theorem}

\noindent
Grain equality is an equivalence relation (reflexive, symmetric,
transitive).

\smallskip\noindent\textbf{Example.}\quad
If $\textit{Customer} \eqg \textit{WebCustomer}$, then each customer
corresponds to exactly one web account, and vice versa --- the two types
represent essentially the same entities at the same level of detail but from a different perspective.

\subsection{Grain Ordering}
\label{subsec:grain-ordering}

A natural generalization asks not just when two types have the same grain,
but when one is \emph{finer} than the other.

\begin{definition}[Grain Ordering]
\label{def:grain-ord}
Given two data types $R_1$ and $R_2$, we say $R_1$ has \emph{lower grain}
(finer granularity) than $R_2$, and write $R_1 \leg R_2$, if and only if
there exists a surjective function
$f : \grain{R_1} \thra \grain{R_2}$ between their grains.
All functions in grain theory are total.
\end{definition}

\begin{theorem}[Grain Ordering Theorem]
\label{thm:grain-ordering}
$R_1 \leg R_2$ if and only if there exists a surjective function
$h : R_1 \thra R_2$ establishing a one-to-many correspondence
(each $R_2$ element has at least one $R_1$ preimage).
\end{theorem}

\noindent
Grain ordering (Definition~\ref{def:grain-ord}) requires only the
existence of a surjection between grains. The witness comes in one
of two forms: (1)~a \emph{projection} from a shared product
structure---e.g., $\textit{OrderDetail} \leg \textit{Order}$ via
$(\textit{orderId}, \textit{lineItemId}) \mapsto \textit{orderId}$
---verifiable from the schema alone; or (2)~a \emph{declared
determination} from domain knowledge such as a foreign-key
reference---e.g., $\textit{Employee} \leg \textit{Department}$ via
$\textit{EmployeeId} \thra \textit{DepartmentId}$. The Armstrong-style
axioms of Section~\ref{subsec:armstrong} reason about both:
reflexivity yields schema-verifiable orderings, augmentation and
transitivity propagate declared ones. Surjectivity is type-level:
it asserts that every coarser-grained value \emph{can} be determined
by some finer-grained value, not that the determination holds in
every instance.

\begin{remark}[Grain vs.\ Functional Dependencies]
\label{rem:grain-vs-fd}
Grain is defined by irreducibility and isomorphism with no reference
to functional dependencies (Definition~\ref{def:grain}). On product
types, grain ordering coincides with functional determination because
both are surjection-existence statements at different
levels---type-level for grain, value-level for FDs. The full
correspondence, including the level distinction and completeness
transfer, is established in Section~\ref{subsec:rw-fd}.
\end{remark}

\begin{theorem}[Grain Subset--Ordering Equivalence]
\label{thm:grain-subset}
$\grain{R_1} \subt \grain{R_2}$ if and only if $R_2 \leg R_1$.
\end{theorem}

\begin{corollary}[Grain Determines All Type Subsets]
\label{cor:grain-all-subsets}
If $G = \grain{R}$, then for any $R'$ such that $R' \subt R$, we have
$G \leg R'$. The grain of a type determines any subset of that
type's fields.
\end{corollary}

\noindent
\begin{theorem}[Grain Ordering is a Partial Order]
\label{thm:grain-partial-order}
$\leg$ is a partial order (up to isomorphism):
\begin{itemize}
\item \textbf{Reflexivity.} $R \leg R$ (identity is surjective).
\item \textbf{Antisymmetry.} $R_1 \leg R_2 \wedge R_2 \leg R_1
  \Rightarrow R_1 \eqg R_2$.
\item \textbf{Transitivity.} $R_1 \leg R_2 \wedge R_2 \leg R_3
  \Rightarrow R_1 \leg R_3$ (composition of surjections is surjective).
\end{itemize}
\end{theorem}

\noindent\textit{Proof of antisymmetry.}\quad
By the Cantor--Bernstein--Schr\"{o}der theorem;
see Appendix~\ref{sec:proofs}.

\smallskip\noindent
Antisymmetry holds up to isomorphism, not strict type equality: types
with isomorphic grains are interchangeable for all correctness
reasoning, even across system boundaries where naming conventions
differ. This distinguishes grain ordering from functional dependencies,
which form only a preorder (Section~\ref{subsec:rw-fd}).

\begin{remark}[Relational Intuition]
\label{rem:relational-interp}
For readers familiar with relational databases: $R_1 \leg R_2$
captures the structural pattern of a foreign key (many-to-one), and
$R_1 \eqg R_2$ a one-to-one correspondence.
\end{remark}

\subsection{Grain Theorems for Type Operations}
\label{subsec:grain-type-ops}

The grain ordering enables several results about how grain interacts with
type-level set operations.

\begin{theorem}[Grain Inference]
\label{thm:grain-inference-sufficient}
For any data type $R$, if there exists a type $G$ such that
(i)~$G \subt R$, (ii)~$G \leg R$, and (iii)~$\grain{G} = G$, then
$\grain{R} = G$: that is, $G$ satisfies Definition~\ref{def:grain} for $R$
($G \cong R$ and no proper sub-type of $G$ is isomorphic to $R$).
In particular, $G = \grain{R}$.
\end{theorem}

\noindent
This theorem provides sufficient conditions for identifying the grain of a
type without direct computation --- a key tool used in
Section~\ref{sec:inference}. Condition~(iii) is essential: it ensures
that $G$ is irreducible (no proper sub-type of $G$ is isomorphic to $R$).

\begin{theorem}[Lattice Absorption]
\label{thm:lattice-absorption}
For any data types $R_1, R_2$ with $R_2 \leg R_1$:
\[
  (R_1 \tin R_2) \eqg R_1
  \qquad \text{and} \qquad
  (R_1 \tun R_2) \eqg R_2.
\]
These are the standard lattice absorption laws applied to the grain
lattice (Section~\ref{subsec:incomparability-lattice}), with $\tin$
as the lub (finest common coarsening) and $\tun$ as the glb (coarsest
common refinement). On product types, both equivalences are computed
by the field-set formulas of Section~\ref{sec:prelim}.
\end{theorem}

\subsection{Armstrong Axioms for Grain Ordering}
\label{subsec:armstrong}

The grain ordering admits a complete axiomatization analogous to Armstrong's
axioms for functional dependencies~\cite{armstrong1974dependency}. Axioms
A1, A2, and A4 restate properties established above (reflexivity, the Grain
Subset Theorem, and transitivity); A3 (Augmentation) is the new axiom
completing the system.

\smallskip\noindent\textbf{Basic axioms.}

\begin{enumerate}
\item[\textbf{A1.}] \textbf{Self-determination.} $R \leg R$ for any type $R$.
\item[\textbf{A2.}] \textbf{Reflexivity.} If $\grain{R_1} \subt \grain{R_2}$,
  then $R_2 \leg R_1$.
\item[\textbf{A3.}] \textbf{Augmentation.} If $R_1 \leg R_2$, then
  $R_1 \tun R_3 \leg R_2 \tun R_3$ for any type $R_3$.
\item[\textbf{A4.}] \textbf{Transitivity.} If $R_1 \leg R_2$ and
  $R_2 \leg R_3$, then $R_1 \leg R_3$.
\end{enumerate}

\smallskip\noindent\textbf{Derived rules.}

\begin{enumerate}
\item[\textbf{A5.}] \textbf{Union.} If $R_1 \leg R_2$ and $R_1 \leg R_3$,
  then $R_1 \leg (R_2 \tun R_3)$.
\item[\textbf{A6.}] \textbf{Decomposition.} If $R_1 \leg (R_2 \tun R_3)$,
  then $R_1 \leg R_2$ and $R_1 \leg R_3$.
\item[\textbf{A7.}] \textbf{Composition.} If $R_1 \leg R_2$ and
  $R_3 \leg R_4$, then $(R_1 \tun R_3) \leg (R_2 \tun R_4)$.
\item[\textbf{A8.}] \textbf{Pseudotransitivity.} If $R_1 \leg R_2$ and
  $(R_2 \tun R_4) \leg R_3$, then $(R_1 \tun R_4) \leg R_3$.
\item[\textbf{A9.}] \textbf{Darwen's Theorem.} If $R_1 \leg R_2$ and
  $R_3 \leg R_4$, then
  $(R_1 \tun (R_3 \tdiff R_2)) \leg (R_2 \tun R_4)$.
\end{enumerate}

\smallskip\noindent
Combined with the Grain Inference
Theorem~(\ref{thm:grain-inference-sufficient}), these axioms enable
systematic inference of grain relationships after data transformations.
These axioms are sound (each rule preserves the existence of a function
between grains) and complete (Section~\ref{subsec:rw-fd}).

\smallskip\noindent\textbf{Example.}\quad
Consider collections $C\;R_1$ and $C\;R_2$ where
$R_1 = \textit{Account} \times \textit{Customer}$ with
$\grain{R_1} = \textit{Account}$, and
$R_2 = \textit{Customer} \times \textit{Address}$ with
$\grain{R_2} = \textit{Customer} \times \textit{Address}$.
After joining on $\textit{Customer}$, we obtain
$R_3 = \textit{Account} \times \textit{Customer} \times \textit{Address}$.
To infer $\grain{R_3}$:
(1)~Since $\grain{R_1} = \textit{Account}$, we have
$\textit{Account} \leg \textit{Customer}$.
(2)~By Augmentation~(A3) with $\textit{Address}$:
$\textit{Account} \times \textit{Address} \leg
\textit{Customer} \times \textit{Address}$.
(3)~By Augmentation~(A3) with $\textit{Account}$:
$\textit{Account} \times \textit{Address} \leg R_3$.
(4)~Since $\textit{Account} \times \textit{Address} \subt R_3$,
$\textit{Account} \times \textit{Address} \leg R_3$, and
$\grain{\textit{Account} \times \textit{Address}} =
\textit{Account} \times \textit{Address}$ (as $\textit{Account}$ is
irreducible), the Grain Inference
Theorem~(\ref{thm:grain-inference-sufficient}) yields
$\grain{R_3} = \textit{Account} \times \textit{Address}$.

\subsection{Grain Incomparability and the Grain Lattice}
\label{subsec:incomparability-lattice}

When neither equality nor ordering holds between two types, they are
grain-incomparable.

\begin{definition}[Grain Incomparability]
\label{def:grain-inc}
Given two data types $R_1$ and $R_2$, we say they are
\emph{grain-incomparable} and write $R_1 \incg R_2$ if and only if
$\neg(R_1 \eqg R_2)$, $\neg(R_1 \leg R_2)$, and $\neg(R_2 \leg R_1)$.
\end{definition}

\noindent
Grain incomparability is irreflexive (no type is incomparable with itself)
and symmetric, but \emph{not} transitive.

\smallskip\noindent\textbf{Example.}\quad
The Section~\ref{sec:intro} fact tables satisfy
$\textit{SalesChannel} \incg \textit{SalesProduct}$: neither grain is a
subset of the other. This incomparability is the source of the fan trap.

\smallskip\noindent\textbf{Lattice structure.}\quad
The grain ordering $\leg$ forms a bounded lattice (up to isomorphism,
Theorem~\ref{thm:grain-partial-order}): every pair $R_1, R_2$ has a
least upper bound $R_1 \tin R_2$ (finest common coarsening) and a
greatest lower bound $R_1 \tun R_2$ (coarsest common refinement). The lub and glb are well-defined for any algebraic
data type by the standard universal properties of lub/glb in a poset;
on product types they coincide with the field-set $\tin$ and $\tun$
operations of Section~\ref{sec:prelim}.

\begin{figure}[t]
\centering
\begin{tikzcd}[column sep=small, row sep=large]
  & R_1 \tin R_2 & \\
  R_1 \arrow[ur, "\leg"]
  & & R_2 \arrow[ul, "\leg"'] \\
  & R_1 \tun R_2 \arrow[ul, "\leg"] \arrow[ur, "\leg"'] &
\end{tikzcd}
\caption{The grain lattice for two incomparable types $R_1, R_2$. The
  greatest lower bound (bottom) is $R_1 \tun R_2$ (finest); the least
  upper bound (top) is $R_1 \tin R_2$ (coarsest).}
\label{fig:lattice}
\end{figure}

\noindent
The lattice is bounded: the finest grain (union of all grains) at bottom,
the coarsest (intersection, or unit type) at top. This structure is central
to the join inference rules (Section~\ref{sec:inference}), where
incomparable grain components are resolved through lattice operations.

\section{Entity, Entity Key, and Grain-Determined Semantics}
\label{sec:entity}

The grain relations established in Section~\ref{sec:relations} characterize
relationships between arbitrary algebraic data types. We now show that
grain theory
naturally gives rise to the notions of \emph{entity} and
\emph{entity key}, and moreover that grain alone determines the behavioral
semantics of data.

\subsection{Entity and Entity Projection}
\label{subsec:entity-def}

\begin{definition}[Entity]
\label{def:entity}
Given a data type $R$ with grain $\grain{R}$, the \emph{entity} of $R$ is a
type $E$ together with a surjective function $\textit{entity} : R \to E$
that extracts the \emph{subject of information} from each element of~$R$.
The entity is a declared semantic property, determined by domain analysis
(answering ``what is this data about?''), whereas the grain is computed
mathematically (as the minimal type isomorphic to~$R$).
For a given grain $\grain{R}$, the entity is unique.
\end{definition}

\noindent
Whereas the grain projection extracts the identifying component, the
entity projection extracts the \emph{subject of information}. The
\emph{entity of data} is broader than the entity-relationship
model's notion~\cite{chen1976entity}: it is a universal property of
any data type, including events (whose entity is the event's subject
or recipient, not the event itself), making the entity key a
consistent integration and reconciliation point across grains and
behavioral classes (Remark~\ref{rem:data-integration}).

\smallskip\noindent\textbf{Example.}\quad
For $R = \textit{WeatherObservation}$ with fields $(\textit{StationId},
\textit{Date}, \textit{Temperature}, \textit{Humidity})$, the entity is
$E = \textit{WeatherStation}$---the subject of each
observation---and the grain is
$\grain{R} = \textit{StationId} \times \textit{Date}$.
Each observation describes a particular station (entity) at a particular
level of detail (grain).

\subsection{Entity Key as Derived Grain}
\label{subsec:entity-key}

\begin{definition}[Entity Key]
\label{def:entity-key}
Given a data type $R$ with entity $E$ and grain $\grain{R}$, the
\emph{entity key} of~$R$ is $\ek{R} \equiv \grain{E}$---the grain of the
entity type. The entity-key projection admits two equivalent derivations:
\begin{enumerate}
\item \emph{Semantic:}\; $\textit{ek} = \textit{grain}_E \circ
  \textit{entity}$ \,---\, extract the entity, then its grain.
\item \emph{Practical:}\; $\textit{ek} = \textit{proj}_{EK} \circ
  \textit{grain}$ \,---\, extract the grain, then project to the
  entity-key component.
\end{enumerate}
The entity key is always a (type-level) subset of the grain
($\ek{R} \subt \grain{R}$); this is not an assumption but a consequence
of the definitions (Theorem~\ref{thm:ek-hierarchy}).
\end{definition}

\noindent
Recall that the grain is the irreducible (``atomic'') core of a
type---no proper subset suffices to identify its elements. Yet the grain
can be meaningfully ``squeezed'': projection
$g_{ek} : \grain{R} \to \ek{R}$ discards non-entity components and
yields the grain of another type---the entity~$E$. Just as the physical
atom turned out to have internal structure, the grain---while indivisible
\emph{within}~$R$---can be projected to a coarser type~$E$ whose
elements aggregate those of~$R$. The relationship is captured by the
commutative square in Figure~\ref{fig:ek-square}:

\begin{figure}[t]
\centering
\begin{tikzcd}[column sep=huge, row sep=large]
  \grain{R} \arrow[r, "f_g", "\cong"']
            \arrow[d, "g_{ek}"']
            \arrow[dr, "g_e" description]
  & R \arrow[d, "\textit{entity}"] \\
  \ek{R} = \grain{E} \arrow[r, "f_{gE}"', "\cong"]
  & E
\end{tikzcd}
\caption{Entity-key commutative square. The two paths from $\grain{R}$
  to~$E$ commute: $\textit{entity} \circ f_g = f_{gE} \circ g_{ek}$.
  Squeezing the grain via $g_{ek}$ yields the entity key.}
\label{fig:ek-square}
\end{figure}

\begin{theorem}[EK--Grain Hierarchy]
\label{thm:ek-hierarchy}
For any data type $R$ with entity $E$:
$\ek{R} \subt \grain{R}$.
\end{theorem}

\noindent
In the \textit{WeatherObservation} example:
$\textit{StationId} \subt (\textit{StationId} \times \textit{Date})$.
The entity key identifies the subject (weather station); the grain
adds the temporal component that distinguishes individual observations.

\begin{remark}[Data Integration and Reconciliation at the Entity Level]
\label{rem:data-integration}
When $R_1$ and $R_2$ share entity $E$, both
$\grain{R_1} \leg \ek{R}$ and $\grain{R_2} \leg \ek{R}$: their grains
converge on the shared entity key, enabling integration at the entity
level even when grains differ.
For example, a customer entity table ($\grain{} = \textit{CustomerId}$),
a customer version history
($\grain{} = \textit{CustomerId} \times \textit{EffectiveFrom}$), and
a customer event log
($\grain{} = \textit{CustomerId} \times \textit{EventDtm}$) all share
$\ek{} = \textit{CustomerId}$---three different grains and behavioral
classes, yet a single integration point.
Beyond integration, the entity key enables \emph{reconciliation} across
collections of different grains and behavioral classes: two collections
$c_1$ and $c_2$ satisfy \emph{entity-key equality}
($c_1 =_{ek} c_2$) when they track the same set of entities, regardless
of their differing grains. This is strictly more informative than
cardinality comparison ($|c_1| = |c_2|$), which reveals nothing about
\emph{which} entities are present or missing.
\end{remark}

\subsection{Grain Determines Behavioral Semantics}
\label{subsec:behavioral-semantics}

The \emph{behavioral semantics} (or \emph{behavior}) of a data type~$R$
are the structural characteristics that determine the proper access
patterns for reading and writing collections of elements of~$R$,
independent of their domain meaning.
A multi-version table is read and written with the same pattern whether
the versions describe customers, contracts, or any other entity.
Each behavioral class is a \emph{type class}---a constraint on the pair
$\langle R, \grain{R}\rangle$---defined by a \emph{grain condition} that
constrains the structure of $\grain{R}$ relative to $\ek{R}$, together
with ordering constraints on any temporal or ordering components.

\begin{definition}[Behavioral Classes]
\label{def:behavioral-class}
Let $R$ be a data type with entity key $\ek{R}$. The five core
behavioral classes are defined by the grain conditions and constraints
in Table~\ref{tab:bc-formal}.
In each class, the temporal or ordering component ($\textit{EventDtm}$,
$\textit{FromDtm}$, $\textit{SnapshotDtm}$) is a \emph{generic type
parameter} constrained to carry a total order ($\leq$) or strict total
order ($<$); it is not a fixed type.
These definitions are mechanized as Agda record types (type classes) with
explicit \texttt{grain-cond} fields (Section~\ref{sec:verification}).
\end{definition}

\begin{table}[t]
\centering
\small
\caption{Formal definition of behavioral classes. Each class is a type class
  (constraint) defined by a grain condition and ordering/integrity constraints.}
\label{tab:bc-formal}
\begin{tabular}{@{}llll@{}}
\toprule
\textbf{Class} & \textbf{Grain Condition} & \textbf{Order} & \textbf{Additional Constraint} \\
\midrule
IsEntity
  & $\grain{R} = \ek{R}$
  & ---
  & --- \\
IsEvent
  & $\grain{R} = \ek{R} \times \textit{EventDtm}$
  & $\leq$
  & --- \\
IsMultiVersion
  & $\grain{R} = \ek{R} \times \textit{FromDtm}$
  & $\leq$
  & $\forall\, r_1, r_2.\; \text{ek}(r_1) {=} \text{ek}(r_2) \wedge \text{adj}(r_1, r_2)$ \\
  & & & $\Rightarrow \text{payload}(r_1) {\neq} \text{payload}(r_2)$ \\
IsSeqEvent
  & $\grain{R} = \ek{R} = \textit{EventDtm}$
  & $<$
  & --- \\
IsSnapshot
  & $\grain{R} = \ek{R} = \textit{SnapshotDtm}$
  & $<$
  & $\forall\, t.\; \{e \mid \exists\, r {\in} C\,R.\; \text{snap}(r) {=} t\}$ \\
  & & & $= \textit{EntityPop}(t)$ \\
\bottomrule
\end{tabular}
\end{table}

\noindent
Each class is defined by three type-level constraints on~$R$:
(1)~a \emph{grain condition} relating $\grain{R}$ to~$\ek{R}$
(Table~\ref{tab:bc-formal}, column~3);
(2)~\emph{element-level properties} via temporal or ordering fields;
(3)~\emph{universally quantified invariants}
(e.g., consecutive versions of the same entity key differ in payload).
The five core classes arise from the intersection of entity-key
structure with temporal ordering; the taxonomy is extensible.

\smallskip\noindent
The classes form a subclass hierarchy by constraint inclusion:
$\textbf{IsMultiVersion} \subset \textbf{IsEvent} \supset
 \textbf{IsSeqEvent} \supset \textbf{IsSnapshot}$. Each subclass adds
constraints: \textbf{IsMultiVersion} requires consecutive same-EK
versions to differ in payload; \textbf{IsSeqEvent} requires a strict
total order (no concurrency); \textbf{IsSnapshot} requires complete
entity coverage at each snapshot time.

\smallskip\noindent\textbf{Example.}\quad
The $\textit{Customer}$ example from Section~\ref{subsec:grain-semantics}
illustrates: with $\ek{\textit{Customer}} = \textit{CustomerId}$ fixed,
the four grain declarations yield \textbf{IsEntity} (entity),
\textbf{IsMultiVersion} (versioned), \textbf{IsMultiVersion} with
$\ek{} = \textit{CustomerId} \times \textit{Address}$ (versioned
customer-address pair), and \textbf{IsEvent} (event) semantics,
respectively.

\begin{remark}[Behavioral Class Propagation]
\label{rem:bc-propagation}
Since grain inference (Section~\ref{sec:inference}) computes the grain
of every RA result, the behavioral class propagates compositionally:
the result grain's structure determines the result's BC. In a pipeline,
the target type's grain and BC are declared; CalcG
(Section~\ref{sec:calcg}) verifies that the computed grain of the
result matches the target grain, which guarantees that the target's BC
is satisfied. For example, in a 1-to-many join
(Proposition~\ref{prop:join-special-cases}, Case~2),
$\textit{Orders}$ (\textbf{IsEvent}) $\bowtie$
$\textit{Customer}$ (\textbf{IsEntity}) on $\textit{CustomerId}$
gives $\grain{Res} = \grain{\textit{Orders}}$, hence
\textbf{IsEvent}.
\end{remark}

\begin{remark}[Paradigm Independence]
\label{rem:paradigm-independence}
Behavioral classes apply independently of domain meaning \emph{and}
data modeling paradigm. Table~\ref{tab:bc-paradigms} maps the five
core classes to well-known table types from Kimball's dimensional
modeling~\cite{kimball2013dw} and
Data~Vault~\cite{linstedt2015dv}: a Kimball SCD~Type~2 dimension and
a Data~Vault Satellite, for example, are both instances of
\textbf{IsMultiVersion}---same grain condition, same read/write
patterns---letting data engineers reason about a small number of
grain-determined patterns rather than the specifics of each paradigm.
\end{remark}

\begin{table}[t]
\centering
\small
\caption{Behavioral classes mapped to data modeling paradigms
  (representative examples).}
\label{tab:bc-paradigms}
\begin{tabular}{@{}lll@{}}
\toprule
\textbf{Class} & \textbf{Grain} & \textbf{Representative Table Types} \\
\midrule
IsEntity & $\ek{R}$
  & SCD Type~1 (Kimball), Hub (Data~Vault) \\
IsEvent & $\ek{R} \times \textit{EventDtm}$
  & Transaction Fact (Kimball), Time-Series \\
IsMultiVersion & $\ek{R} \times \textit{FromDtm}$
  & SCD Type~2 (Kimball), Satellite (Data~Vault) \\
IsSeqEvent & $\ek{R} = \textit{EventDtm}$
  & CDC Stream, Audit Log \\
IsSnapshot & $\ek{R} = \textit{SnapshotDtm}$
  & Periodic Snapshot Fact (Kimball) \\
\bottomrule
\end{tabular}
\end{table}

\noindent
This grain-determines-behavior link is a hallmark of grain theory:
purely structural constraints like functional dependencies cannot
distinguish an entity from a versioned entity
(Section~\ref{subsec:rw-fd}).

\begin{definition}[The Type-Level Denotation of Data]
\label{rem:type-level-denotation}
Together, grain, entity key, and behavioral class yield a triple
$(\grain{R},\, \ek{R},\, \BC{R})$ that we call the
\emph{type-level denotation} of $R$, capturing the essential
semantic content of any algebraic data type: \emph{what} each
record uniquely represents ($\grain{R}$), \emph{which entity} it
belongs to ($\ek{R}$), and \emph{how} it should be read and written
($\BC{R}$). The remaining sections show that this denotation is
\emph{computable} (Sections~\ref{sec:inference},
\ref{sec:grain-factorization}) and \emph{verifiable at design time}
via CalcG (Section~\ref{sec:calcg}).
\end{definition}

\section{Grain Inference for Relational Operations}
\label{sec:inference}

We now turn to the central technical contribution: systematic inference
of grain through data transformations. Given the grains of input types,
what is the grain of the result? We answer this question first for
equi-joins---the most complex case---and then for the full relational
algebra.

\smallskip\noindent\textbf{Scope.}\quad
The relational algebra operates on product types: relations are sets of
tuples, each a product of attribute values. We therefore specialize the
type-level operations of Section~\ref{sec:prelim} throughout this
section---$\subt$ is component-wise sub-product inclusion
(Definition~\ref{def:type-subset}), and $\tin$, $\tun$, $\tdiff$ are
the corresponding component-set operations. The underlying grain
theory and lattice (Sections~\ref{sec:foundations},~\ref{sec:relations})
are universal; the inference rules of this section specialize to
RA-specific operations because that is the algebra under study. The
general surjection-based reading of Section~\ref{sec:prelim} extends
to richer type disciplines (sum and recursive types) and to
transformation algebras beyond the relational one; both are outside
the present scope.

\subsection{Grain Inference for Equi-Joins}
\label{subsec:equijoin-inference}

We assume an abstract collection type $C$ such that $C\;R$ denotes a
finite bag of elements of type~$R$. We do not assume set semantics.
By Theorem~\ref{thm:grain-uniqueness}, the grain projection is injective,
so distinct elements of $C\;R$ are uniquely identified by their grain values.

\begin{theorem}[Equi-Join Candidate Grains]
\label{thm:equijoin-candidates}
For any two collections $C\;R_1$ and $C\;R_2$ sharing common fields
of type $J_k$ (i.e., $J_k \subt R_1$ and $J_k \subt R_2$), the
equi-join on~$J_k$ produces a result $C\;Res$ where
$Res = (R_1 \tdiff J_k) \times (R_2 \tdiff J_k) \times J_k$.
For either labeling $(i,j) \in \{(1,2),(2,1)\}$, the candidate
\[
G_{cand}(i,j) \;=\; \grain{R_i} \tun (\grain{R_j} \tdiff J_k)
\]
satisfies $G_{cand}(i,j) \eqg Res$: each candidate is
isomorphic to the grain of the result and determines every element
of~$Res$. The lattice bounds
\[
(\grain{R_1} \times \grain{R_2}) \leg Res \leg
(\grain{R_1} \tin \grain{R_2})
\]
hold for both candidates. The lower bound is achieved when $J_k$
shares no fields with either grain (the cartesian-product case);
the upper bound when the grains are equal and $J_k$ covers the
grain.
\end{theorem}

\noindent
Theorem~\ref{thm:equijoin-candidates} gives two types isomorphic to
$Res$. The grain is the unique-up-to-isomorphism \emph{irreducible}
one (Definition~\ref{def:grain}); the next theorem identifies it by a
schema-level comparison of the inputs' join-key grain portions.

\begin{theorem}[Grain of an Equi-Join]
\label{thm:equijoin-grain}
Let $G_i^{J_k} = \grain{R_i} \tin J_k$ denote the $J_k$-portion of
each input grain. The grain $\grain{Res}$ is the irreducible
candidate of Theorem~\ref{thm:equijoin-candidates}, selected by
$\subt$-comparing $G_1^{J_k}$ and $G_2^{J_k}$:
\begin{itemize}
\item If $G_1^{J_k} \subt G_2^{J_k}$ (the \emph{canonical}
  labeling), then
  \[
    \grain{Res} \;=\; G_{cand}(1,2) \;=\;
    \grain{R_1} \tun (\grain{R_2} \tdiff J_k).
  \]
  When the inclusion is strict, the reverse candidate $G_{cand}(2,1)$
  properly contains $G_{cand}(1,2)$, so it fails the
  no-proper-sub-type clause of Definition~\ref{def:grain} and is
  reducible; when $G_1^{J_k} = G_2^{J_k}$ the two candidates coincide.
\item If $G_1^{J_k}$ and $G_2^{J_k}$ are $\subt$-incomparable, both
  candidates are irreducible: each is a grain of $Res$, and the two
  are isomorphic (Theorem~\ref{thm:multiple-grains}).
\end{itemize}
\end{theorem}

\begin{remark}[FD Correspondence]
\label{rem:equijoin-fd}
On product types, Theorem~\ref{thm:equijoin-candidates} specializes
Klug's key-propagation rule for
equi-joins~\cite{klug1980calculating}; the irreducibility refinement
of Theorem~\ref{thm:equijoin-grain}---with the $\subt$-comparability
of the join-key grain portions $G_1^{J_k}, G_2^{J_k}$ as the gating
condition---does not appear as a named result in the key-propagation
literature (Section~\ref{subsec:rw-fd}).
\end{remark}

\noindent\textit{Proof sketch.}\quad
Theorem~\ref{thm:equijoin-candidates} gives $G_{cand}(i,j) \eqg Res$
for either labeling, by \emph{bootstrapping}: $\grain{R_i}$
determines $r_i$, hence $r_i.J_k$; the join equality propagates this
to $r_j.J_k$, and with $\grain{R_j} \tdiff J_k$ in the candidate all
of $\grain{R_j}$ is recovered. Theorem~\ref{thm:equijoin-grain} then
asks which candidate is \emph{irreducible}. A candidate's only
removable fields are its $J_k$-fields, and such a field is redundant
only if the opposite input grain recovers it through the join. Under
the canonical labeling $G_1^{J_k} \subt G_2^{J_k}$, every $J_k$-field
of $G_{cand}(1,2)$ lies in $\grain{R_2}$, so removing it also breaks
the $\grain{R_2}$ route: $G_{cand}(1,2)$ is irreducible. Under the
strict reverse labeling $G_{cand}(2,1)$ carries surplus $J_k$-fields
that $\grain{R_1}$ still recovers, so it is reducible. Full proofs in
Appendix~\ref{sec:proofs}. \qed

\smallskip\noindent
Note that $J_k$ is a \emph{type}---the common fields where
$J_k \subt R_1$ and $J_k \subt R_2$---not a join predicate. Because
the formula operates entirely at the type level, it applies uniformly
to inner and outer joins. Semi-joins and anti-joins return only~$R_1$,
so $\grain{Res} = \grain{R_1}$ (Table~\ref{tab:ra-inference}).

\smallskip\noindent\textbf{Example (canonical vs.\ reverse labeling).}\quad
Using the Section~\ref{subsec:armstrong} types: join
$R_1 = \textit{Account} \times \textit{Customer}$
($\grain{R_1} = \textit{Account}$) with
$R_2 = \textit{Customer} \times \textit{Address}$
($\grain{R_2} = \textit{Customer} \times \textit{Address}$) on
$J_k = \textit{Customer}$. The $J_k$-portions are
$G_1^{J_k} = \textit{Account} \tin \textit{Customer} = \emptyset$
and $G_2^{J_k} = (\textit{Customer} \times \textit{Address})
\tin \textit{Customer} = \textit{Customer}$, so
$G_1^{J_k} \psub G_2^{J_k}$ (comparable, with the canonical labeling
already on input~1). The two candidates of
Theorem~\ref{thm:equijoin-candidates}:
\begin{itemize}
\item $G_{cand}(1,2) = \textit{Account} \tun \textit{Address}
  = \textit{Account} \times \textit{Address}$---irreducible
  (\textit{Customer} is eliminated by bootstrapping: determined by
  \textit{Account} through the join). By
  Theorem~\ref{thm:equijoin-grain},
  $\grain{Res} = \textit{Account} \times \textit{Address}$.
\item $G_{cand}(2,1) = \textit{Customer} \times \textit{Address}
  \times \textit{Account}$---not irreducible: \textit{Customer} is
  determined within the candidate by \textit{Account} (via
  $\grain{R_1} \to \textit{Customer}$ in $R_1$, propagated through
  the join equality). Hence
  $G_{cand}(2,1) \eqg Res$ but
  $G_{cand}(2,1) \neq \grain{Res}$.
\end{itemize}

\subsection{Special Cases}
\label{subsec:special-cases}

\begin{proposition}[Join Special Cases]
\label{prop:join-special-cases}
Theorem~\ref{thm:equijoin-grain} simplifies for the three main grain
relationships between the inputs. A fourth case addresses the join-key
pattern of natural joins:
\begin{enumerate}
\item \textbf{Equal grains} ($R_1 \eqg R_2$): if
  the join key covers both grains ($\grain{R_1} \subt J_k$ and
  $\grain{R_2} \subt J_k$), then $\grain{Res} = \grain{R_1}$
  (equivalently $\grain{R_2}$, since $R_1 \eqg R_2$ makes the two
  isomorphic grains of $Res$); otherwise the general formula applies.
\item \textbf{Ordered grains} ($R_1 \leg R_2$, $R_1$ finer) with
  $\grain{R_2} \subt J_k$ (the 1-to-many pattern, e.g.,
  $\textit{OrderDetail} \bowtie \textit{Order}$ on
  $\textit{OrderId}$): $\grain{Res} = \grain{R_1}$, preserving the
  finer grain.
\item \textbf{Incomparable grains} ($R_1 \incg R_2$):
  if neither grain is contained in~$J_k$, the result grain is strictly
  finer than both---the classic source of fan traps
  (Section~\ref{sec:errors}).
\item \textbf{Natural join} ($J_k = R_1 \tin R_2$):
  $\grain{Res} = \grain{R_1} \tun (\grain{R_2} \tdiff R_1)$ ---
  all of $\grain{R_1}$ plus $R_2$-unique grain fields.
\end{enumerate}
\end{proposition}

\subsection{Generalized Equi-Join}
\label{subsec:generalized-join}

In practice, join keys often have different structures across systems
(e.g., a composite key $\textit{Year} \times \textit{Month} \times
\textit{Day}$ in one source vs.\ a single $\textit{Date}$ type in
another). Theorem~\ref{thm:equijoin-candidates} requires identical
join keys ($J_k \subt R_1$ and $J_k \subt R_2$); the following
generalization relaxes identity to isomorphism while preserving the
physical-presence requirement.

\begin{theorem}[Generalized Grain Inference for Equi-Joins]
\label{thm:generalized-equijoin}
For collections $C\;R_1$ and $C\;R_2$ with join key types
$J_{k1} \subt R_1$ and $J_{k2} \subt R_2$ respectively, where
$J_{k1} \eqg J_{k2}$ via isomorphism
$\phi : J_{k1} \stackrel{\cong}{\longrightarrow} J_{k2}$,
the equi-join matching rows where
$\phi(\pi_1(r_1)) = \pi_2(r_2)$ produces a result with grain:
\[
Res \eqg \grain{R_1} \tun (\grain{R_2} \tdiff J_{k2})
\]
where the type-level operations are applied modulo~$\phi$.
The physical-presence premises ($J_{k1} \subt R_1$,
$J_{k2} \subt R_2$) are unchanged from
Theorem~\ref{thm:equijoin-candidates}: each input must contain its own
join key fields for projection and bootstrapping to work; only the
matching criterion is relaxed from equality to equality up to~$\phi$.
\end{theorem}

\noindent
The proof reduces to Theorem~\ref{thm:equijoin-candidates} via the
isomorphism (Appendix~\ref{sec:proofs}).

\smallskip\noindent\textbf{Example.}\quad
Joining $R_1$ keyed on $\textit{Year} \times \textit{Month} \times
\textit{Day}$ with $R_2$ keyed on $\textit{Date}$ via the
date-composition isomorphism $\phi$ yields
$Res \eqg R_1 \eqg R_2$.

\subsection{Complete Inference Rules for the Relational Algebra}
\label{subsec:ra-rules}

Beyond equi-joins, grain inference applies to the full relational algebra.
The governing principle is: if an operation preserves all grain fields in
the result type, the grain is unchanged. Table~\ref{tab:ra-inference}
summarizes the rules for all standard operations.

\begin{table}[t]
\centering
\small
\begin{tabular}{@{}llll@{}}
\toprule
\textbf{Operation} & \textbf{$Res$} & \textbf{$\grain{Res}$} & \textbf{Why} \\
\midrule
\multicolumn{4}{@{}l}{\textit{Unary}} \\
Selection $\sigma$ & $R$ & $\grain{R}$ & type unchanged \\
Projection $\pi_S$ & $S$ & $\grain{R}$ if $\grain{R} \subt S$ & grain preserved \\
Extension & $R \times D$ & $\grain{R}$ & $D$ dependent \\
Rename $\rho$ & $R'$ & $R' \eqg R$ & structure preserved \\
Grouping $\gamma_{G_c}$ & $G_c \times Agg$ & $\grain{G_c}$ & one row/group \\
\midrule
\multicolumn{4}{@{}l}{\textit{Set operations (union-compatible types)}} \\
$\cup$, $\cap$, $-$ & $R$ & $\grain{R}$ & type unchanged \\
\midrule
\multicolumn{4}{@{}l}{\textit{Joins}} \\
Equi-join & Thm~\ref{thm:equijoin-grain} & $\grain{R_1} \tun (\grain{R_2} \tdiff J_k)$ & Thm~\ref{thm:equijoin-grain} \\
Cross/$\theta$-join & $R_1 \times R_2$ & $\grain{R_1} \times \grain{R_2}$ & Thm~\ref{thm:grain-product} \\
Semi $\ltimes$ & $R_1$ & $\grain{R_1}$ & filter only \\
Anti $\rhd$ & $R_1$ & $\grain{R_1}$ & filter only \\
\bottomrule
\end{tabular}
\caption{Grain inference rules for relational algebra operations.
  Proofs for all rules appear in Appendix~\ref{sec:proofs}.}
\label{tab:ra-inference}
\end{table}

\noindent
Three operations deserve attention. \textbf{Projection} may
\emph{coarsen} the grain when $\grain{R} \not\subt S$, losing grain
fields and producing duplicates---a common bug detectable at design
time. \textbf{Extension} preserves grain (the derived column is
functionally dependent on~$R$; window functions are typical
examples). \textbf{Grouping} \emph{determines} a new grain
$\grain{G_c}$ from the grouping columns. These rules compose into
the CalcG verification algorithm (Section~\ref{sec:calcg}).

\section{The Grain Homomorphism and Design-Time Verification}
\label{sec:grain-factorization}

This section establishes the semantic foundation for design-time
pipeline reasoning. Proposition~\ref{prop:grain-factorization-codomain}
showed that every transformation~$h : R_1 \to R_2$ factors through
$\grain{R_2}$ via some $e : R_1 \to \grain{R_2}$. Here we refine~$e$
by introducing the \emph{grain lift} $\gl{h}$---the grain-to-grain
content of~$h$---and prove that grain projection commutes with
transformation (Theorem~\ref{thm:grain-factorization}). The grain lift
acts as a \emph{semantic function}: it maps each concrete data operation
to its abstract grain-level meaning, discarding all payload attributes.
For relational algebra operations, the grain lift is concretely computed
by the inference rules of Table~\ref{tab:ra-inference}
(Section~\ref{sec:inference}). Compositionality
(Corollary~\ref{cor:grain-compositionality}) follows: grain lifts
chain correctly across pipeline stages, so the verification of an
$n$-step pipeline reduces to $n$ independent grain-level checks.

\subsection{The Grain Homomorphism}
\label{subsec:grain-factorization}

\begin{definition}[Grain Lift]
\label{def:grain-lift}
For any function $h : R_1 \to R_2$, the \emph{grain lift} of~$h$ is
\[
  \gl{h} \;=\; \textit{grain}_{R_2} \circ h \circ f_{g_{R_1}}
  \;:\; \grain{R_1} \to \grain{R_2}
\]
where $\textit{grain}_{R_i} = f_{g_{R_i}}^{-1} : R_i \to \grain{R_i}$
is the grain projection (inverse of the grain function).
The grain lift is always constructible as the composition of three
total functions. We call $\gl{h}$ the \emph{denotation of the
transformation}~$h$---the transformation-side counterpart of the
denotation of data
(Definition~\ref{rem:type-level-denotation})---capturing the
essential grain-to-grain content of~$h$, discarding all payload
attributes and implementation details. For relational algebra
operations, the concrete form of~$\gl{h}$ is given by the inference
rules of Table~\ref{tab:ra-inference}.
\end{definition}

\noindent
The grain lift refines the factorization of
Proposition~\ref{prop:grain-factorization-codomain}: the factor
$e = \textit{grain}_{R_2} \circ h$ decomposes as
$e = \gl{h} \circ \textit{grain}_{R_1}$, giving the three-step
decomposition
$h = f_{g_{R_2}} \circ \gl{h} \circ \textit{grain}_{R_1}$.

\begin{theorem}[Grain Homomorphism]
\label{thm:grain-factorization}
For any function $h : R_1 \to R_2$, grain projection commutes with
transformation:
\[
  \textit{grain}_{R_2} \circ h
  \;=\;
  \gl{h} \circ \textit{grain}_{R_1}
\]
That is, transforming and then projecting to grain yields the same
result as projecting to grain and then applying the grain lift
(Figure~\ref{fig:grain-homomorphism}).
\end{theorem}

\noindent
The commutativity is by construction (one-line proof in
Appendix~\ref{sec:proofs}); the grain lift is built to make
grain-level reasoning \emph{faithful} to the full-type level. The
substance lies in the concrete instantiation: the inference rules of
Section~\ref{sec:inference} give the explicit grain-level denotation
of each RA operation, making design-time reasoning feasible.

\begin{figure}[t]
\centering
\begin{tikzcd}[column sep=huge, row sep=large]
  \grain{R_1} \arrow[r, "\gl{h}"]
  & \grain{R_2} \\
  R_1 \arrow[u, "\textit{grain}_{R_1}"]
      \arrow[r, "h"']
  & R_2 \arrow[u, "\textit{grain}_{R_2}"']
\end{tikzcd}
\caption{The grain homomorphism: grain projection commutes with
  transformation. The bottom row operates on full types; the top row
  operates on grains. The two paths from $R_1$ to $\grain{R_2}$
  are equal: $\textit{grain}_{R_2} \circ h = \gl{h} \circ
  \textit{grain}_{R_1}$.}
\label{fig:grain-homomorphism}
\end{figure}

\begin{remark}[Connection to Grain Ordering]
\label{rem:factorization-ordering}
$h$ is surjective if and only if $\gl{h}$ is surjective (since
$\textit{grain}_{R_2}$ and $f_{g_{R_1}}$ are bijections).
Therefore, $R_1 \leg R_2$ precisely when the grain
lift~$\gl{h}$ is a surjection.
\end{remark}

\begin{corollary}[Grain Compositionality]
\label{cor:grain-compositionality}
For composable transformations $h_1 : R_1 \to R_2$ and
$h_2 : R_2 \to R_3$, the grain lifts compose:
\[
  \gl{h_2 \circ h_1} = \gl{h_2} \circ \gl{h_1}
\]
This is the algebraic homomorphism condition: the grain lift of a
composed transformation equals the composition of the individual
grain lifts (Figure~\ref{fig:grain-compositionality}).
\end{corollary}

\begin{figure}[t]
\centering
\begin{tikzcd}[column sep=large, row sep=large]
  \grain{R_1}
    \arrow[r, "\gl{h_1}"]
    \arrow[rr, bend left=25, "\gl{h_2 \circ h_1}" description]
  & \grain{R_2}
    \arrow[r, "\gl{h_2}"]
  & \grain{R_3} \\
  R_1 \arrow[u, "\textit{grain}_{R_1}"]
      \arrow[r, "h_1"']
  & R_2 \arrow[u, "\textit{grain}_{R_2}"]
        \arrow[r, "h_2"']
  & R_3 \arrow[u, "\textit{grain}_{R_3}"']
\end{tikzcd}
\caption{Grain compositionality: the top row composes independently of
  the bottom row. The grain lift of the composition equals the
  composition of the grain lifts:
  $\gl{h_2 \circ h_1} = \gl{h_2} \circ \gl{h_1}$. Data engineers can
  reason about the entire pipeline at the grain level (top row) without
  touching the full types (bottom row).}
\label{fig:grain-compositionality}
\end{figure}

\noindent
Together, the grain lift, the homomorphism theorem, and
compositionality establish a \emph{pipeline denotational design}
principle: pipeline correctness at the grain level is established
entirely at design time, instantiated by the inference rules of
Section~\ref{sec:inference} and automated by CalcG
(Section~\ref{sec:calcg}).

\begin{remark}[Operation Independence]
\label{rem:operation-independence}
The grain homomorphism holds for \emph{any} function
$h : R_1 \to R_2$---not just relational algebra. Table~\ref{tab:ra-inference}
instantiates it for RA operations; any algebra with computable grain
inference rules supports the same design-time verification. The
methodological value is that after each operation, the type-level
denotation $(\grain{R},\, \ek{R},\, \BC{R})$ is recomputable from the
schema alone, making pipeline correctness verifiable step by step.
\end{remark}

\subsection{Grain as a Correctness Constraint for Data Transformations}
\label{subsec:grain-correctness-constraint}

Grain relations serve as \emph{mapping invariants} for data
transformations: a transformation targeting a collection with declared
grain~$\grain{R_2}$ is correct only if its output grain matches the
target's declared grain. The CalcG algorithm
(Section~\ref{sec:calcg}) decides this condition entirely at the type
level, without accessing any data.

This type-level correctness constraint is what makes grain theory
useful for data engineering. Rather than discovering grain
violations at runtime---after data has been processed, loaded, and
potentially consumed downstream---grain violations are caught at
design time through schema analysis alone.

\section{The CalcG Verification Algorithm}
\label{sec:calcg}

The inference rules of Section~\ref{sec:inference} compose into a
verification algorithm for entire pipeline DAGs.

\begin{definition}[CalcG]
\label{def:calcg}
Let $P = (V, E)$ be a pipeline DAG where each vertex $v \in V$ is a
relational algebra operation with a declared grain annotation
$\grain{R_v^{\textit{decl}}}$, and each edge represents data flow.
\emph{CalcG} traverses $P$ in topological order (every operation is
processed after all the operations it depends on). At each vertex~$v$
with operation~$\textit{op}_v$ and input grains
$\grain{R_1}, \ldots, \grain{R_k}$ from predecessor nodes, CalcG
applies the corresponding inference rule from
Table~\ref{tab:ra-inference} to compute the \emph{inferred} grain
$\grain{Res_v}$.
For equi-join vertices with join key~$J_k$, CalcG computes the
$J_k$-grain-portions $G_i^{J_k} = \grain{R_i} \tin J_k$. When these
are $\subt$-incomparable, either labeling yields the correct grain
(Theorem~\ref{thm:equijoin-grain}); CalcG accepts both candidates.
When they are $\subt$-comparable, CalcG labels $R_1$ as the input with
the smaller portion ($G_1^{J_k} \subt G_2^{J_k}$) to produce the
correct (and minimal) grain; only this labeling yields an
irreducible candidate (Theorem~\ref{thm:equijoin-grain}).
A vertex is \emph{grain-correct} if
$\grain{Res_v} = \grain{R_v^{\textit{decl}}}$; the pipeline is
grain-correct if every vertex is.
\end{definition}

\noindent
For sequential composition, CalcG is function composition:
\[
\text{CalcG}[\textit{op}_1 ; \textit{op}_2](G_{\textit{in}})
\;=\;
\text{CalcG}[\textit{op}_2]\!\bigl(\text{CalcG}[\textit{op}_1](G_{\textit{in}})\bigr).
\]
For fan-out vertices (one output feeding multiple successors), the
computed grain is reused unchanged. For fan-in vertices (binary
operators: joins, set operations), the corresponding inference rule
(Table~\ref{tab:ra-inference}) computes the output grain from the
input grains of both predecessors. Sequential composition, fan-out,
and fan-in are the only vertex patterns in a DAG, so topological
traversal with sound inference rules covers all pipeline topologies.

\begin{theorem}[CalcG Correctness and Complexity]
\label{thm:calcg-correctness}
CalcG correctly computes the grain of every intermediate and final
result in a pipeline DAG, is decidable, and runs in time
$O(|V| \cdot k \cdot |F|)$ and space $O(|V| \cdot |F|)$, where $k$ is
the maximum number of inputs to any operation (typically $k \leq 2$)
and $|F|$ is the maximum number of fields in any type.
\end{theorem}

\noindent\textit{Proof sketch.}\quad
Correctness: induction on the topological ordering of~$V$, using the
soundness of each inference rule (Table~\ref{tab:ra-inference}).
Decidability: every rule involves only finite type-level set operations
($\tin$, $\tdiff$, $\tun$, $\subt$) on field sets. Complexity: one pass
over the DAG, with $O(|F|)$ work per vertex and $O(|V| \cdot |F|)$
storage. Full proof in Appendix~\ref{sec:proofs}.

\noindent
CalcG is not tied to the relational algebra: it needs only a
\emph{computable} per-operation grain-inference function over a
finite pipeline DAG. Table~\ref{tab:ra-inference} provides this
function for the relational algebra; windowed collections
(Section~\ref{sec:adt}) provide it for streaming pipelines.

\begin{corollary}[Data-Independent Verification]
\label{cor:data-independent}
A pipeline from source type~$R_S$ to target type~$R_T$ is grain-correct
if $\emph{CalcG}[P](\grain{R_S}) = \grain{R_T}$. This check requires
no data access---it operates entirely on the type-level schema and
its declared determinations (foreign keys and declared grains), which
are metadata, not data.
\end{corollary}

\noindent
Traditional grain verification materializes intermediate results and
checks uniqueness---$O(n)$ in data size. CalcG achieves the same
guarantee in $O(|V| \cdot k \cdot |F|)$ (schema only), independent of data size,
suitable for compile-time integration.

Underlying CalcG is a logical \emph{implication} question: given a
finite set of grain facts---assertions $R_1 \leg R_2$,
$R_1 \eqg R_2$, or $\grain{R} = T$---and a further such fact~$\sigma$,
do they entail~$\sigma$? On product types this question has a
classical answer~\cite{armstrong1974dependency}.

\begin{corollary}[Grain Implication on Product Types is FD Implication]
\label{cor:grain-implication}
On products over a finite field universe, deciding whether a finite
set of grain facts entails a target grain fact coincides with
functional-dependency implication: it reduces in linear time to FD
implication (Proposition~\ref{prop:grain-completeness}) and is decided
by the attribute-closure algorithm of~\cite{beeri1979computational}
(reduction in Appendix~\ref{sec:proofs}). Each CalcG vertex performs
one such attribute-closure step.
\end{corollary}

\noindent
This product-type correspondence is all we claim. The general
\emph{grain implication problem}---entailment among grain assertions
over sum, inductive, and coinductive types, where the
functional-dependency reduction no longer applies---is not treated in
this paper; we leave it as future work (Section~\ref{sec:conclusion}).

\begin{remark}[Pipeline Synthesis]
\label{rem:pipeline-synthesis}
Beyond verification, pipeline \emph{synthesis} is open: given a
pipeline with inferred grain~$G \neq T_T$, synthesize
a minimal rewrite that restores grain-correctness. The required
rewrite depends on the grain relationship---coarsening (e.g.,
pre-aggregation) when $G \stg T_T$, refinement (joining with a
finer-grained source) when $T_T \stg G$, or alignment to a common
grain when $G \incg T_T$. We leave this as future work.
\end{remark}

\begin{remark}[Practical Cost Implications]
\label{rem:data-independent-cost}
Data-independent verification is not merely an asymptotic improvement.
In pay-per-query cloud data warehouses (BigQuery, Snowflake, Databricks),
costs are proportional to the volume of data scanned. Since CalcG
operates entirely on type-level schema metadata without issuing any data
query, the monetary cost of verification is literally zero---no data
scanned means no charges incurred.
\end{remark}

\section{Applications: Detecting Data Transformation Errors}
\label{sec:errors}

We now show how grain theory detects two well-known classes of data
transformation errors.

\subsection{Fan Traps}
\label{subsec:fan-traps}

\begin{proposition}[Fan Trap Characterization]
\label{prop:fan-trap}
Given collections $C\;R_1$ and $C\;R_2$ joined on~$J_k$ producing
$C\;Res$, a \emph{fan trap} occurs when $Res \stg R_i$
for some~$i$: the result grain is strictly finer than an input grain,
meaning each row from~$R_i$ is duplicated in~$Res$, inflating every
aggregate computed over~$R_i$.
\end{proposition}

\noindent\textbf{Detection.}\quad
Apply Theorem~\ref{thm:equijoin-grain} to compute $\grain{Res}$; check
$Res \stg R_i$. This is a data-independent, schema-only check
(Corollary~\ref{cor:data-independent}).

\smallskip\noindent\textbf{Example: CalcG on the Section~\ref{sec:intro}
motivating query.}\quad
Recall $\grain{\texttt{SalesChannel}} =
\textit{CustomerId} \times \textit{ChannelId} \times \textit{Date}$ and
$\grain{\texttt{SalesProduct}} =
\textit{CustomerId} \times \textit{ProductId} \times \textit{Date}$, with join key
$J_k = \textit{CustomerId} \times \textit{Date}$.
\begin{enumerate}
\item Apply Theorem~\ref{thm:equijoin-grain}:
  $\grain{Res} = \grain{R_1} \tun (\grain{R_2} \tdiff J_k)$.
\item $\grain{R_2} \tdiff J_k = \textit{ProductId}$, so
  $\grain{Res} = (\textit{CustomerId} \times \textit{ChannelId} \times \textit{Date})
  \tun \textit{ProductId}
  = \textit{CustomerId} \times \textit{ChannelId} \times \textit{Date}
  \times \textit{ProductId}$.
\item $Res \stg R_1$ and
  $Res \stg R_2$ --- \textbf{fan trap detected},
  before any data is processed.
\end{enumerate}

\noindent\textbf{Prevention.}\quad
Pre-aggregate each input to the target grain before joining: if
$\grain{R_i'} = \grain{\textit{Target}}$, then
$Res \not\stg R_i'$.

\subsection{Chasm Traps}
\label{subsec:chasm-traps}

\begin{remark}[Chasm Trap Vulnerability]
\label{rem:chasm-trap}
A grain ordering chain $R_1 \leg R_2 \leg \cdots \leg R_n$ is the
type-level structural pattern behind a join chain---typically a
snowflake schema where dimension hierarchies are normalized into
separate tables joined via foreign keys. At the type level the chain
is well-formed (transitivity gives $R_1 \leg R_n$).

A \emph{chasm trap} is a data-instance failure in which nullable
foreign keys cause intermediate rows to lack matches in the next
table, so inner joins silently discard them and lose data. Grain
theory operates at the type level and does not model data-instance
phenomena; its contribution is identifying the structural pattern
vulnerable to such loss. \textbf{Prevention:} use outer joins, or
ensure foreign keys point to a valid record (e.g., a designated
unknown/default entity).
\end{remark}

\subsection{Scope of Grain Correctness}
\label{subsec:scope}

Pipeline correctness spans three layers, each operating at a different
cost:

\smallskip\noindent\textbf{(1) Grain-level correctness (data-independent).}\quad
CalcG (Section~\ref{sec:calcg}) propagates grain through an entire
pipeline DAG---potentially hundreds or thousands of
operations---verifying that the output grain matches the declared target
grain. This is a \emph{mathematical proof}, not a test: it guarantees
the absence of grain errors for \emph{all} possible data inputs, derived
entirely from the pipeline's type-level structure without executing a
single query. Testing can demonstrate the presence of a bug on a
particular dataset; grain verification proves its absence universally.
Fan traps, grain mismatches, and incorrect aggregation granularity are
specific consequences of grain violations that this proof eliminates by
construction. Grain ordering chains identify structural patterns
vulnerable to chasm traps (Remark~\ref{rem:chasm-trap}).

\smallskip\noindent\textbf{(2) Behavioral class correctness (compile
time).}\quad
The behavioral classes of Section~\ref{sec:entity} enforce that
operations respect their declared read/write patterns---e.g., a
time-window capture on an entity collection, or a point-in-time query
on an event collection, are type errors. This layer is also automatic: the
type checker rejects violations before any code runs.

\smallskip\noindent\textbf{(3) Domain correctness (specification
time).}\quad
Business rules (referential integrity, domain predicates, specific
aggregation logic) require data-level verification and are outside the
scope of this paper. Extending grain theory with operation contracts
that derive additional constraints (collection subset, cardinality
bounds) automatically from pipeline structure is a natural next step
(Section~\ref{sec:conclusion}).

\section{Mechanized Verification}
\label{sec:verification}

The entire theory presented in this paper---Sections~\ref{sec:foundations}
through~\ref{sec:errors}---has been independently validated through two
complementary methodologies: mechanized theorem proving and property-based
testing.

\paragraph{Lean~4 Formalization.}
All theorems have been mechanically verified in Lean~4~\cite{demoura2021lean4}
with the Mathlib library, using an abstract axiomatization that mirrors
the proof style of this paper: data types and grain are opaque structures
with assumed properties, and all results are proved from these axioms.
The formalization comprises 34~modules organized by paper section, with
zero \texttt{sorry} obligations (unfinished proofs).
The axiom base consists of 36~\textsc{GrainStructure} axioms
(type subset, isomorphism, grain operator, and type-level operations)
and 9~\textsc{EquiJoinStructure} axioms (join-specific hypotheses
for Theorem~\ref{thm:equijoin-grain}).
The only result not re-derived from first principles is the completeness
direction of the Armstrong system (Proposition~\ref{prop:grain-completeness}),
which transfers directly from Armstrong's classical
result~\cite{armstrong1974dependency}.

\paragraph{Property-Based Testing.}
The equi-join grain inference theorem
(Theorem~\ref{thm:equijoin-grain})---the most complex result in
the paper---was additionally validated via property-based testing (PBT)
against a PostgreSQL implementation.
Three complementary modes were used:
(a)~exhaustive enumeration of all valid equi-join configurations up to
5~columns (12{,}271~configurations);
(b)~randomized testing via the Hypothesis framework~\cite{maciver2019hypothesis} at sizes up to
22~columns (10{,}000~configurations); and
(c)~16~hand-crafted edge cases under both uniform and skewed data
distributions.
Four properties were tested: uniqueness of both candidate formulas ($F_1$,
$F_2$) on join results, minimality (irreducibility) of the
convention-compliant grain, non-minimality of the non-convention formula
for comparable cases, and both-minimal for incomparable cases.
Across all 22{,}303~configurations and 73{,}818~individual property
checks, zero violations were found.

\paragraph{Agda Formalization.}
The core of grain theory
(Sections~\ref{sec:foundations}--\ref{sec:entity}) is independently
formalized in Agda: the grain as a relation between types (an
isomorphism witness together with an irreducibility predicate), the
entity and the entity key, and the three grain relations $\eqg$,
$\leg$, $\incg$ with their Armstrong-style axioms, each rendered as
an Agda relation or record carrying its function witnesses.
The behavioral classes of Definition~\ref{def:behavioral-class} are
likewise mechanized as record types (type classes) with explicit
\texttt{grain-cond} fields; the implementation confirms that
\textit{EventDtm}, \textit{FromDtm}, and \textit{SnapshotDtm} are
generic type parameters constrained to carry a total or strict total
order, and that the subclass hierarchy
(\textbf{IsMultiVersion}~$\subset$~\textbf{IsEvent},
\textbf{IsSnapshot}~$\subset$~\textbf{IsSeqEvent}) is enforced via
superclass instance fields. The full Agda development is available in
the artifact repository.

\paragraph{Artifacts.}
The Lean~4 proofs, the Agda formalization, the PBT framework, and all
experimental artifacts are publicly
available.\footnote{Artifact repository: \url{https://github.com/nkarag/grain-theory-artifacts}.}

\section{Related Work}
\label{sec:related}

Grain theory connects to several established research areas.

\subsection{Dimensional Modeling and Grain}
\label{subsec:rw-dimensional}

Kimball~\cite{kimball1996data,kimball2013dw} introduced grain as the
fundamental design decision for fact tables in star schemas, but
Kimball's grain is informal (prose, not math), narrow (fact tables in
star schemas), and non-systematic (no transformation rules). Data
Vault~2.0~\cite{linstedt2015dv} provides an alternative methodology
with its own taxonomy (Hubs, Satellites, Links). Grain theory
formalizes grain as a composable property of arbitrary algebraic data
types with inference rules across the full relational algebra. The
behavioral classes of Section~\ref{sec:entity} unify both paradigms:
Kimball's SCD~Type~2 dimensions and Data~Vault Satellites are both
instances of IsMultiVersion; transaction and periodic snapshot fact
tables map to IsEvent and IsSnapshot
(Table~\ref{tab:bc-paradigms}).

Gray et al.~\cite{gray1997datacube} defined the Data Cube operator,
and Harinarayan et al.~\cite{harinarayan1996implementing} analyzed
the lattice of $2^N$ GROUP BY aggregations: an instance of the grain
lattice where roll-up is grain coarsening and drill-down is grain
refinement. Grain theory provides the algebraic structure underlying
this operational lattice and extends it with cross-cube correctness
guarantees (Section~\ref{sec:errors}).

Summarizability
theory~\cite{lenz1997summarizability,hurtado2002olap} formalizes when
aggregation is correct across grain levels via disjointness,
completeness, and type compatibility. Cabibbo and
Torlone~\cite{cabibbo1998logical}'s level hierarchies are
structurally grain orderings on dimension types. These formalizations
address declared OLAP dimensions; grain theory extends the algebraic
structure to arbitrary product types with inference rules across the
full RA, and fan-trap detection (Section~\ref{sec:errors}) is a
type-level analogue of summarizability conditions.

\subsection{Functional Dependencies and Normalization}
\label{subsec:rw-fd}

Grain theory and FD theory differ in both foundation and purpose.
FDs, MVDs, and JDs reason about decomposition
\emph{within} a single
relation~\cite{beeri1979computational,fagin1977multivalued,abiteboul1995foundations}:
normalization asks ``is this schema well-structured?'' Grain theory
reasons about correctness of transformations \emph{across} types: it
asks ``does this pipeline preserve or intentionally change the level of
detail that data represents?'' The two are complementary, not competing:
on product types, grain theory re-casts the powerful results of
classical dependency theory---Armstrong's axioms, key propagation,
attribute closure---under a data-engineering lens that makes them
actionable for pipeline verification, a problem FD theory was never
designed to address.

Crucially, \emph{grain is not defined in terms of functional
dependencies}. The grain of a type (Definition~\ref{def:grain}) is the
minimal type isomorphic to it---an irreducibility condition on the type's
structure, with no reference to FDs. On product types, this structural definition \emph{corresponds} to
minimal-key reasoning---but that correspondence is a theorem
(Proposition~\ref{prop:grain-completeness}), not the definition:
grain ordering $R_1 \leg R_2$ corresponds to functional determination
of $\grain{R_2}$'s components by $\grain{R_1}$'s components.

Despite this different foundation, grain ordering and FDs share
structural parallels. Table~\ref{tab:grain-vs-fd} summarizes the
distinctions.

\begin{table}[t]
\centering
\small
\begin{tabular}{@{}lll@{}}
\toprule
\textbf{Property} & \textbf{Grain ($\leg$)} & \textbf{FDs ($\to$)} \\
\midrule
Foundation & Irreducibility + isomorphism & Attribute determination \\
Domain & Algebraic data types & Attribute sets (flat) \\
Axioms & A1--A9 (\S\ref{subsec:armstrong}) & Armstrong's axioms \\
Antisymmetry & Yes (up to $\cong$) & No (preorder on attribute sets) \\
Structure & Partial order & Preorder (partial order on quotient) \\
Purpose & Transformation correctness & Schema decomposition \\
\bottomrule
\end{tabular}
\caption{Grain ordering vs functional dependencies.}
\label{tab:grain-vs-fd}
\end{table}

\noindent
For antisymmetry: FDs on attribute sets form a preorder; quotienting by
closure equivalence ($A^+ = B^+$) yields a partial order. Grain ordering
is \emph{intrinsically} a partial order on types up to isomorphism
(Theorem~\ref{thm:grain-partial-order}), with no quotient construction
needed---mutual surjections yield injections both ways, giving
isomorphism by Cantor--Bernstein--Schr\"{o}der.
The distinction between projection-witnessed and declaration-witnessed
grain ordering (Section~\ref{sec:relations}) mirrors the separation in
FD theory between reflexivity (schema-derivable) and
augmentation/transitivity (requiring declared dependencies).
This alignment is not coincidental: both assert the existence of a
surjective function from a finer representation to a coarser one, but at
different \emph{levels}. Grain ordering is a property of the type
itself: the surjection $\grain{R_1} \thra \grain{R_2}$ is between
types as sets of all possible values, independent of any collection
(Remark~\ref{rem:grain-vs-pk}). A functional dependency $X \to Y$, by
contrast, is value-level: its content is that the induced function
$\pi_X(r) \to \pi_Y(r)$ exists for every legal instance~$r$, an
extra constraint on populations of $R$ that the type itself does not
carry. On product types, the type-level surjection forces the
corresponding FD in every instance, which is why the two theories
agree.
The order-theoretic extension of these operations to any algebraic
data type appears in Sections~\ref{sec:prelim}
and~\ref{subsec:incomparability-lattice}; full categorical
formalization is future work (Section~\ref{sec:conclusion}).

Armstrong~\cite{armstrong1974dependency} established the axiom system
for functional dependencies. Our grain axioms A1--A9
(Section~\ref{sec:relations}) correspond structurally to Armstrong's
axioms, and the correspondence transfers completeness:

\begin{proposition}[Completeness of Grain Axioms]
\label{prop:grain-completeness}
On product types, the grain axioms A1--A9 (Section~\ref{subsec:armstrong})
are complete for grain ordering: every grain ordering valid in all
models on product types is derivable.
\end{proposition}

\noindent\textit{Proof sketch.}\quad
On product types, A1--A9 map to Armstrong's axioms and derived rules
under the grain-FD correspondence; completeness then transfers
from~\cite{armstrong1974dependency}. Full proof in
Appendix~\ref{sec:proofs}.

\smallskip\noindent
On product types, grain ordering and FDs describe the same mathematical
structure. The contribution of this paper is not a new dependency
theory but a \emph{denotational} framework built on that structure,
with three components that have no FD analogue:
(i)~the \emph{denotation of data}
(Section~\ref{sec:entity}): the triple
$(\grain{R},\,\ek{R},\,\BC{R})$ ties grain to entity identity and
read/write patterns; (ii)~a \emph{faithful denotation of
transformations}
(Section~\ref{sec:grain-factorization}): each transformation has a
grain-level counterpart that composes homomorphically (the pipeline
denotational design homomorphism); (iii)~\emph{pipeline-scale
verification} via CalcG (Section~\ref{sec:calcg}). The minimal key
notion~\cite{maier1983theory} is structurally
Definition~\ref{def:grain}'s irreducibility condition restricted to
attribute sets.

Klug~\cite{klug1980calculating} computed the closure of functional
dependencies through relational algebra expressions. On product types,
CalcG's inference rules (Table~\ref{tab:ra-inference}) are the
key-propagation specialization of Klug's general FD rules: the
candidate-grain formula of
Theorem~\ref{thm:equijoin-candidates}---that
$\grain{R_1} \tun (\grain{R_2} \tdiff J_k)$ is isomorphic to
$\grain{Res}$ (in classical terms, a superkey of the result)---is
derivable from Klug's calculus. The new content is the
\emph{irreducibility-with-labeling} result of
Theorem~\ref{thm:equijoin-grain}: this candidate is the grain (not
merely $\eqg$ to it) precisely when the labeling convention
$G_1^{J_k} \subt G_2^{J_k}$ is applied. This minimality statement and its connection to the
asymmetric labeling rule does not appear as a named result in the
key-propagation
literature~\cite{klug1980calculating,maier1983theory,abiteboul1995foundations}.

\subsection{Data Modeling and Schema-Level Reasoning}
\label{subsec:rw-categorical}

Spivak~\cite{spivak2012functorial} developed functorial data migration
between database schemas presented as categories; Schultz and
Wisnesky~\cite{schultz2017algebraic} implemented this in AQL/CQL with
provable correctness, and Patterson et al.~\cite{patterson2022categorical}
introduced attributed C-sets as a practical categorical data structure.
The Grain Homomorphism Theorem
(Theorem~\ref{thm:grain-factorization}) is structurally parallel:
every data transformation reduces to its grain lift, which composes
(Corollary~\ref{cor:grain-compositionality}). Grain theory addresses
a narrower problem---grain verification rather than full data
migration---without the full categorical machinery.
Section~\ref{sec:adt}'s extension of grain to inductive and
coinductive types does use one standard ingredient of
it---initial-algebra semantics~\cite{birddemoor1997}, under which
naturally isomorphic polynomial functors have isomorphic initial
algebras and final coalgebras; a fuller functorial account remains
future work.

Cheney et al.~\cite{cheney2013practical} provide type-safe embeddings
of relational queries; grain theory targets a different type-level
correctness dimension: granularity preservation rather than type-safe
query construction.

Green et al.~\cite{green2007provenance} introduced provenance
semirings tracking how output tuples derive from inputs
compositionally; Buneman et al.~\cite{buneman2001why} characterized
why- and where-provenance. Grain theory tracks a coarser type-level
invariant---what level of detail outputs represent---but shares the
algebraic, compositional approach.

Chen~\cite{chen1976entity} introduced the entity-relationship model,
defining entities as standalone domain concepts. Our notion
(Section~\ref{sec:entity}) extends this: the \emph{entity of data}
is a grain-derived property of any algebraic data type, not only
standalone entities, with the entity key arising as a derived grain
($\ek{R} = \grain{E}$) (Remark~\ref{rem:data-integration}). In the
data exchange setting~\cite{arenas2014foundations,kolaitis2005schema},
grain relations and entity-key convergence could serve as mapping
invariants ensuring source-to-target transformations preserve
granularity and entity identity.

\section{Conclusion and Future Work}
\label{sec:conclusion}

Grain theory establishes a faithful type-level denotation of data
and of transformations. The \emph{denotation of data} gives every
data type $R$ a triple $(\grain{R},\,\ek{R},\,\BC{R})$ capturing
grain, entity key, and behavioral class
(Sections~\ref{sec:foundations}--\ref{sec:entity})---a denotation
defined for every algebraic data type, the grain operator commuting
with each type constructor, the inductive and coinductive fixed
points included (Section~\ref{sec:adt}). The \emph{denotation of
transformations} gives every transformation $h : R_1 \to R_2$ a grain
lift $\gl{h} : \grain{R_1} \to \grain{R_2}$ that commutes with grain
projection and composes homomorphically
(Section~\ref{sec:grain-factorization}). The \emph{grain
homomorphism} is the formal foundation of \emph{pipeline
denotational design}: a pipeline's grain-correctness can be
established by reasoning at the semantic level alone, before any
code is written or any query is executed. CalcG
(Section~\ref{sec:calcg}) operationalizes this for collections of
product types under the relational algebra; fan traps,
chasm-trap vulnerabilities, and behavioral-class violations are
exposed entirely at the type level (Section~\ref{sec:errors}). All
theorems are mechanically verified in
Lean~4~\cite{demoura2021lean4} (Section~\ref{sec:verification}).

Grain theory captures \emph{structural} correctness---whether each
transformation preserves or intentionally changes the level of
detail that data represents. Several directions for future work
emerge.
First, this paper \emph{verifies} grain-correctness; a natural
next step is grain-correctness \emph{by construction}---equipping
each behavioral class with operation contracts so that every
well-typed composition preserves grain correctness.
Second, integrating grain verification with additional correctness
dimensions---referential integrity, domain constraints, temporal
consistency---would broaden the class of errors detectable at the
type level.
Third, the lattice operations $\tin$ (lub) and $\tun$ (glb) are
introduced order-theoretically in this paper
(Sections~\ref{sec:prelim},~\ref{subsec:incomparability-lattice});
their categorical formalization in the category of types and
surjections---with $\tdiff$ as a chosen complement---remains future
work, connecting grain theory to the categorical database
literature~\cite{spivak2012functorial,schultz2017algebraic}. In
that setting, schema-derivable and declared surjection witnesses
would appear as a stratification of the surjection catalog.
Finally, this paper connects grain verification to
functional-dependency implication on product types only
(Section~\ref{sec:calcg}); the general \emph{grain implication
problem}---deciding entailment among grain assertions over sum,
inductive, and coinductive types---is left open, the
functional-dependency reduction having no counterpart once sums or
recursion enter.

\bibliographystyle{ACM-Reference-Format}
\bibliography{bibliography/references}

\appendix

\section{Proofs}
\label{sec:proofs}

This appendix contains full proofs for all formal results in the main paper.
Throughout, we use the macros and definitions established in the body.

\subsection{Foundations (Section~\ref{sec:foundations})}

\begin{proof}[Proof of Theorem~\ref{thm:multiple-grains} (Multiple Grains Isomorphism)]
If $G_1$ and $G_2$ are both grains of~$R$, then by Definition~\ref{def:grain}
there exist isomorphisms
$f_{g1} : G_1 \stackrel{\cong}{\longrightarrow} R$ and
$f_{g2} : G_2 \stackrel{\cong}{\longrightarrow} R$.
Define $g_{12} = f_{g2}^{-1} \circ f_{g1} : G_1 \to G_2$.
Its inverse is $g_{12}^{-1} = f_{g1}^{-1} \circ f_{g2} : G_2 \to G_1$.
We verify:
\begin{align*}
g_{12} \circ g_{12}^{-1}
  &= (f_{g2}^{-1} \circ f_{g1}) \circ (f_{g1}^{-1} \circ f_{g2})
   = f_{g2}^{-1} \circ f_{g2} = \mathrm{id}_{G_2}, \\
g_{12}^{-1} \circ g_{12}
  &= (f_{g1}^{-1} \circ f_{g2}) \circ (f_{g2}^{-1} \circ f_{g1})
   = f_{g1}^{-1} \circ f_{g1} = \mathrm{id}_{G_1}.
\end{align*}
Therefore $g_{12}$ is an isomorphism and $G_1 \cong G_2$.
\end{proof}

\begin{proof}[Proof of Theorem~\ref{thm:grain-uniqueness} (Grain Uniqueness)]
By Definition~\ref{def:grain}, the grain function
$f_g : \grain{R} \stackrel{\cong}{\longrightarrow} R$ is an isomorphism.
The grain projection $\textit{grain} = f_g^{-1} : R \to \grain{R}$ is
therefore also an isomorphism, hence injective. If
$\textit{grain}\;r_1 = \textit{grain}\;r_2$, then
$r_1 = f_g(\textit{grain}\;r_1) = f_g(\textit{grain}\;r_2) = r_2$.
\end{proof}

\begin{proof}[Proof of Theorem~\ref{thm:grain-idempotent} (Grain Operator Idempotency)]
Let $G = \grain{R}$. We show $\grain{G} = G$.

Suppose for contradiction that $G' = \grain{G}$ with $G' \psub G$.
By the grain definition, $G' \cong G$. Since $G \cong R$
(Definition~\ref{def:grain}), transitivity gives $G' \cong R$. But
$G' \psub G \subt R$, contradicting the irreducibility of~$G$ as the grain
of~$R$: a proper subset of the grain that is isomorphic to~$R$ cannot exist.

Therefore $\grain{G} = G$, and by substitution
$\grain{\grain{R}} = \grain{R}$.
\end{proof}

\begin{proof}[Proof of Theorem~\ref{thm:grain-product} (Grain of Product Types)]
Let $R = R_1 \times R_2 \times \cdots \times R_n$. We show
$\grain{R} = \grain{R_1} \times \grain{R_2} \times \cdots \times \grain{R_n}$
by verifying both conditions of Definition~\ref{def:grain}.

\emph{Isomorphism.}\quad
For each~$i$, let $f_{gi} : \grain{R_i} \stackrel{\cong}{\longrightarrow} R_i$
be the grain function. Construct
\[
f_g(g_1, \ldots, g_n) = (f_{g1}(g_1), \ldots, f_{gn}(g_n)).
\]
Since each $f_{gi}$ is an isomorphism, $f_g$ is an isomorphism with inverse
\[
f_g^{-1}(r_1, \ldots, r_n) = (f_{g1}^{-1}(r_1), \ldots, f_{gn}^{-1}(r_n)).
\]

\emph{Irreducibility.}\quad
Suppose for contradiction that
$G' \psub \grain{R_1} \times \cdots \times \grain{R_n}$ with $G' \cong R$.
Since $G'$ is a proper sub-product, there exists some component~$i$ where the
$i$-th projection of $G'$ is a proper subset of $\grain{R_i}$, or $G'$ omits
component~$i$ entirely. In either case, projecting $G' \cong R$ onto the
$i$-th factor yields a proper subset of $\grain{R_i}$ that is isomorphic
to~$R_i$, contradicting the irreducibility of $\grain{R_i}$.
\end{proof}

\begin{proof}[Proof of Theorem~\ref{thm:grain-sum} (Grain of Sum Types)]
Let $R = R_1 + R_2 + \cdots + R_n$. We show
$\grain{R} = \grain{R_1} + \grain{R_2} + \cdots + \grain{R_n}$.

\emph{Isomorphism.}\quad
For each~$i$, let $f_{gi} : \grain{R_i} \stackrel{\cong}{\longrightarrow} R_i$.
Define $f_g$ by case analysis:
$f_g(g) = f_{gi}(g)$ when $g : \grain{R_i}$.
Since each $f_{gi}$ is an isomorphism on its component and the sum components
are disjoint, $f_g$ is an isomorphism.

\emph{Irreducibility.}\quad
Suppose $G' \psub \grain{R_1} + \cdots + \grain{R_n}$ with $G' \cong R$.
Then there exists some component~$i$ where $G'$ restricts to a proper subset
of $\grain{R_i}$. But then $G' \cong R$ would require this proper subset to
be isomorphic to~$R_i$, contradicting the irreducibility of $\grain{R_i}$.
\end{proof}

\subsection{Grain Relations (Section~\ref{sec:relations})}

\begin{proof}[Proof of Theorem~\ref{thm:grain-equality} (Grain Equality)]
We prove $R_1 \eqg R_2 \;\Leftrightarrow\; R_1 \cong R_2$.

($\Rightarrow$)\quad
If $R_1 \eqg R_2$, then by Definition~\ref{def:grain-eq} there exists
$f : \grain{R_1} \stackrel{\cong}{\longrightarrow} \grain{R_2}$.
Let $f_{g1} : \grain{R_1} \stackrel{\cong}{\longrightarrow} R_1$ and
$f_{g2} : \grain{R_2} \stackrel{\cong}{\longrightarrow} R_2$ be the
respective grain functions. Then
$h = f_{g2} \circ f \circ f_{g1}^{-1} : R_1 \stackrel{\cong}{\longrightarrow} R_2$
is a composition of isomorphisms, hence an isomorphism.

($\Leftarrow$)\quad
If $h : R_1 \stackrel{\cong}{\longrightarrow} R_2$, construct
$f = f_{g2}^{-1} \circ h \circ f_{g1} : \grain{R_1} \stackrel{\cong}{\longrightarrow} \grain{R_2}$.
This is a composition of isomorphisms, so by Definition~\ref{def:grain-eq},
$R_1 \eqg R_2$.
\end{proof}

\begin{proof}[Proof of Theorem~\ref{thm:grain-ordering} (Grain Ordering)]
$R_1 \leg R_2$ if and only if there exists a surjective
$h : R_1 \thra R_2$ establishing a one-to-many correspondence.

($\Rightarrow$)\quad
If $R_1 \leg R_2$, then by Definition~\ref{def:grain-ord} there exists
a surjective $f : \grain{R_1} \thra \grain{R_2}$. Construct
$h = f_{g2} \circ f \circ f_{g1}^{-1} : R_1 \to R_2$, where $f_{g1}^{-1}$
and $f_{g2}$ are the grain projection and grain function respectively.
Since $f_{g1}^{-1}$ and $f_{g2}$ are bijections and $f$ is surjective,
$h$ is surjective (composition of surjections), establishing a one-to-many
correspondence where each $R_2$ element has at least one $R_1$ preimage.

($\Leftarrow$)\quad
If $h : R_1 \thra R_2$ is surjective, construct
$f = f_{g2}^{-1} \circ h \circ f_{g1} : \grain{R_1} \to \grain{R_2}$.
Since $f_{g1}$ and $f_{g2}^{-1}$ are bijections and $h$ is surjective,
$f$ is surjective, so $R_1 \leg R_2$ by Definition~\ref{def:grain-ord}.
\end{proof}

\begin{proof}[Proof of Theorem~\ref{thm:grain-partial-order} (Grain Ordering is a Partial Order)]
\emph{Reflexivity.}\quad The identity function
$\text{id} : \grain{R} \to \grain{R}$ is surjective, so $R \leg R$.

\emph{Antisymmetry.}\quad Suppose $R_1 \leg R_2$ and $R_2 \leg R_1$.
By Definition~\ref{def:grain-ord}, there exist surjective functions
$f : \grain{R_1} \thra \grain{R_2}$ and
$g : \grain{R_2} \thra \grain{R_1}$.
Every surjection admits a section (injective right inverse):
$f$ surjective gives an injection $s : \grain{R_2} \hookrightarrow
\grain{R_1}$, and $g$ surjective gives an injection
$t : \grain{R_1} \hookrightarrow \grain{R_2}$.
By the Cantor--Bernstein--Schr\"{o}der theorem, injections in both
directions yield a bijection: $\grain{R_1} \cong \grain{R_2}$,
giving $R_1 \eqg R_2$ by Definition~\ref{def:grain-eq}.

\emph{Transitivity.}\quad If
$f : \grain{R_1} \thra \grain{R_2}$ and
$g : \grain{R_2} \thra \grain{R_3}$ are surjective, then
$g \circ f : \grain{R_1} \thra \grain{R_3}$ is surjective
(composition of surjections), so $R_1 \leg R_3$.
\end{proof}

\begin{proof}[Proof of Theorem~\ref{thm:grain-subset} (Grain Subset--Ordering Equivalence)]
Both directions are immediate: $\grain{R_1} \subt \grain{R_2}$ asserts
a surjection $\grain{R_2} \thra \grain{R_1}$
(Definition~\ref{def:type-subset}), which is exactly
Definition~\ref{def:grain-ord}'s condition for $R_2 \leg R_1$.
\end{proof}

\begin{proof}[Proof of Corollary~\ref{cor:grain-all-subsets} (Grain Determines All Type Subsets)]
Let $G = \grain{R}$ and $R' \subt R$. The grain function
$f_g : G \stackrel{\cong}{\longrightarrow} R$ is surjective (bijection);
the projection $p : R \thra R'$ is surjective (by definition of $\subt$).
Their composition $p \circ f_g : G \thra R'$ is surjective, so $G \leg R'$
by Definition~\ref{def:grain-ord}.
\end{proof}

\begin{proof}[Proof of Theorem~\ref{thm:grain-inference-sufficient} (Grain Inference)]
We show that if $G$ satisfies (i)~$G \subt R$, (ii)~$G \leg R$, and
(iii)~$\grain{G} = G$, then $\grain{R} = G$
(Definition~\ref{def:grain}: isomorphism + irreducibility).

\emph{Step~1: $R \leg G$.}\quad
From (i), $G \subt R$, so there is a surjective projection
$p : R \thra G$. The composition $p \circ f_g : \grain{R} \thra G$
is surjective (bijection composed with surjection), so $R \leg G$
by Definition~\ref{def:grain-ord}.

\emph{Step~2: Grain equivalence.}\quad
Hypothesis~(ii) gives $G \leg R$, and Step~1 gives $R \leg G$. By
antisymmetry (Theorem~\ref{thm:grain-partial-order}), $G \eqg R$.

\emph{Step~3: Isomorphism ($G \cong R$).}\quad
From Step~2, $\grain{G} \cong \grain{R}$. By~(iii), $\grain{G} = G$,
so $G \cong \grain{R}$; since $\grain{R} \cong R$
(Definition~\ref{def:grain}), $G \cong R$ by transitivity.

\emph{Step~4: Irreducibility.}\quad
Suppose $S \subt G$ with $S \cong R$. Since $G \cong R$ (Step~3),
$S \cong G$ by transitivity. By Definition~\ref{def:grain},
$\grain{G}$ is irreducible w.r.t.~$G$: any $S \subt \grain{G}$ with
$S \cong G$ satisfies $\grain{G} \subt S$. Since $\grain{G} = G$
by~(iii), this gives $G \subt S$.

\smallskip\noindent
Steps~3--4 establish Definition~\ref{def:grain} for $G$ and $R$:
$\grain{R} = G$.
\end{proof}

\begin{proof}[Proof of Theorem~\ref{thm:lattice-absorption} (Lattice Absorption)]
We prove both equivalences when $R_2 \leg R_1$ --- equivalently,
$\grain{R_1} \subt \grain{R_2}$ by Theorem~\ref{thm:grain-subset}.

\emph{Lub absorption: $(R_1 \tin R_2) \eqg R_1$.}\quad
We verify the three conditions of
Theorem~\ref{thm:grain-inference-sufficient} for $G = \grain{R_1}$
and target $T = R_1 \tin R_2$.

(i)~$\grain{R_1} \subt (R_1 \tin R_2)$: $\grain{R_1} \subt R_1$
(grain is a sub-product of its type); from the premise,
$\grain{R_1} \subt \grain{R_2} \subt R_2$, so $\grain{R_1} \subt R_2$.
By the universal property of intersection,
$\grain{R_1} \subt (R_1 \tin R_2)$.

(ii)~$R_1 \leg (R_1 \tin R_2)$: since $(R_1 \tin R_2) \subt R_1$
and $\grain{R_1}$ determines all subsets of $R_1$
(Corollary~\ref{cor:grain-all-subsets}), $R_1 \leg (R_1 \tin R_2)$.

(iii)~$\grain{\grain{R_1}} = \grain{R_1}$: by
Theorem~\ref{thm:grain-idempotent}.

By Theorem~\ref{thm:grain-inference-sufficient},
$(R_1 \tin R_2) \eqg R_1$.

\smallskip
\emph{Glb absorption: $(R_1 \tun R_2) \eqg R_2$.}\quad
We verify the same three conditions for $G = \grain{R_2}$ and
target $T = R_1 \tun R_2$.

(i)~$\grain{R_2} \subt (R_1 \tun R_2)$: $\grain{R_2} \subt R_2 \subt
(R_1 \tun R_2)$.

(ii)~$R_2 \leg (R_1 \tun R_2)$: the premise gives $R_2 \leg R_1$,
and $R_2 \leg R_2$ holds by reflexivity; Axiom~A5 (Union) then gives
$R_2 \leg (R_1 \tun R_2)$.

(iii)~$\grain{\grain{R_2}} = \grain{R_2}$: by
Theorem~\ref{thm:grain-idempotent}.

By Theorem~\ref{thm:grain-inference-sufficient},
$(R_1 \tun R_2) \eqg R_2$.
\end{proof}

\paragraph{Soundness of Armstrong Axioms A1--A9.}
We verify that each axiom preserves the existence of a function between
grains.

\begin{itemize}
\item \textbf{A1 (Self-determination).} $R \leg R$: the identity function
  $\mathrm{id} : \grain{R} \to \grain{R}$ witnesses the ordering.

\item \textbf{A2 (Reflexivity).} If $\grain{R_1} \subt \grain{R_2}$, then
  $R_2 \leg R_1$: Theorem~\ref{thm:grain-subset}.

\item \textbf{A3 (Augmentation).} If $R_1 \leg R_2$ via
  $f : \grain{R_1} \to \grain{R_2}$, then for any~$R_3$:
  By Theorem~\ref{thm:grain-product},
  $\grain{R_1 \tun R_3} = \grain{R_1} \times \grain{R_3}$ and
  $\grain{R_2 \tun R_3} = \grain{R_2} \times \grain{R_3}$
  (treating $\tun$ as product on disjoint components).
  The function $(f \times \mathrm{id}) : \grain{R_1} \times \grain{R_3}
  \to \grain{R_2} \times \grain{R_3}$ witnesses
  $R_1 \tun R_3 \leg R_2 \tun R_3$.

\item \textbf{A4 (Transitivity).} If $R_1 \leg R_2$ via~$f$ and
  $R_2 \leg R_3$ via~$g$, then $g \circ f : \grain{R_1} \to \grain{R_3}$
  witnesses $R_1 \leg R_3$.

\item \textbf{A5 (Union).} If $R_1 \leg R_2$ and $R_1 \leg R_3$, then
  by A3 (augment $R_1 \leg R_2$ with $R_3$) we get
  $R_1 \tun R_3 \leg R_2 \tun R_3$, and by A4 (compose with
  $R_1 \leg R_1 \tun R_3$ from A2) we get $R_1 \leg R_2 \tun R_3$.

\item \textbf{A6 (Decomposition).} If $R_1 \leg (R_2 \tun R_3)$, then
  since $\grain{R_2} \subt \grain{R_2 \tun R_3}$ and
  $\grain{R_3} \subt \grain{R_2 \tun R_3}$, the projections composed
  with the witnessing function give $R_1 \leg R_2$ and $R_1 \leg R_3$.

\item \textbf{A7 (Composition).} From $R_1 \leg R_2$ and $R_3 \leg R_4$,
  A3 gives $(R_1 \tun R_3) \leg (R_2 \tun R_3)$ and
  $(R_2 \tun R_3) \leg (R_2 \tun R_4)$. By A4,
  $(R_1 \tun R_3) \leg (R_2 \tun R_4)$.

\item \textbf{A8 (Pseudotransitivity).} From $R_1 \leg R_2$ and
  $(R_2 \tun R_4) \leg R_3$: A3 gives
  $(R_1 \tun R_4) \leg (R_2 \tun R_4)$. By A4,
  $(R_1 \tun R_4) \leg R_3$.

\item \textbf{A9 (Darwen's Theorem).} From $R_1 \leg R_2$ and
  $R_3 \leg R_4$: by A3, $(R_1 \tun (R_3 \tdiff R_2)) \leg
  (R_2 \tun (R_3 \tdiff R_2))$. Since
  $R_3 \tdiff R_2 \subt R_3$ and $R_3 \leg R_4$,
  by A6 and A4, $(R_3 \tdiff R_2) \leg R_4$, giving
  $(R_2 \tun (R_3 \tdiff R_2)) \leg (R_2 \tun R_4)$.
  By A4, $(R_1 \tun (R_3 \tdiff R_2)) \leg (R_2 \tun R_4)$.
\end{itemize}

\subsection{Entity Key (Section~\ref{sec:entity})}

\begin{proof}[Proof of Theorem~\ref{thm:ek-hierarchy} (EK--Grain Hierarchy)]
By Definition~\ref{def:entity-key}, $\ek{R} = \grain{E}$.
We construct a surjection $g_{ek} : \grain{R} \thra \grain{E}$, which
gives $\grain{E} \subt \grain{R}$ by Definition~\ref{def:type-subset}.

By Definition~\ref{def:entity}, $\textit{entity} : R \thra E$ is a
surjection. The grain function $f_{g_R} : \grain{R} \to R$ is a bijection
(Definition~\ref{def:grain}), so the composition
$g_e = \textit{entity} \circ f_{g_R} : \grain{R} \thra E$
is a surjection. The grain function $f_{g_E} : \grain{E} \to E$ is also
a bijection, so its inverse $f_{g_E}^{-1} : E \to \grain{E}$ exists.
Define $g_{ek} = f_{g_E}^{-1} \circ g_e : \grain{R} \to \grain{E}$.
As the composition of a surjection with a bijection, $g_{ek}$ is a
surjection, giving $\ek{R} = \grain{E} \subt \grain{R}$.
\end{proof}

\subsection{Grain Inference (Section~\ref{sec:inference})}

\begin{proof}[Proof of Theorem~\ref{thm:equijoin-candidates} (Equi-Join Candidate Grains)]
Given collections $C\;R_1$ and $C\;R_2$ with join key
$J_k \subt R_1$ and $J_k \subt R_2$, the equi-join produces
$C\;Res$ where $Res = (R_1 \tdiff J_k) \times (R_2 \tdiff J_k) \times J_k$.
Each result element is a pair $(r_1, r_2)$ with matching $J_k$ values.

\paragraph{Notation.}
Let $G_i^{J_k} = \grain{R_i} \tin J_k$ (the $J_k$-portion of each grain)
and $G_i^{rest} = \grain{R_i} \tdiff J_k$ (the non-$J_k$ portion).
Then $G_{cand}(i,j) = \grain{R_i} \tun G_j^{rest}$.

\paragraph{$G_{cand}(i,j) \eqg Res$.}
By Theorem~\ref{thm:grain-equality},
$G_{cand}(i,j) \eqg Res$ iff
$G_{cand}(i,j) \cong Res$. We verify conditions~(i)~$\subt$ and
(ii)~$\leg$ of Theorem~\ref{thm:grain-inference-sufficient} for
$G_{cand}(i,j)$ and~$Res$; together they give the isomorphism.

\emph{Condition~(i): $G_{cand}(i,j) \subt Res$.}\quad
$\grain{R_i} \subt R_i \subt Res$ and
$G_j^{rest} \subt R_j \tdiff J_k \subt Res$, so
$G_{cand}(i,j) \subt Res$.

\emph{Condition~(ii): $G_{cand}(i,j) \leg Res$.}\quad
We show $G_{cand}(i,j)$ determines $Res$ via bootstrapping. Take
$(i,j) = (1,2)$ without loss of generality (the argument is
symmetric). Suppose two result pairs $(r_1, r_2)$ and
$(r_1', r_2')$ agree on $G_{cand}(1,2)$:
\begin{enumerate}
\item $G_{cand}(1,2)$ contains all of $\grain{R_1}$, so
  $\textit{grain}(r_1) = \textit{grain}(r_1')$, hence
  $r_1 = r_1'$ (Theorem~\ref{thm:grain-uniqueness}).
\item In particular $r_1.J_k = r_1'.J_k$. The equi-join condition gives
  $r_2.J_k = r_1.J_k = r_1'.J_k = r_2'.J_k$.
\item $G_{cand}(1,2)$ contains $G_2^{rest}$, so $r_2$ and $r_2'$ agree on
  $G_2^{rest}$. Combined with step~2: $r_2$ and $r_2'$ agree on
  $G_2^{rest} \tun G_2^{J_k} = \grain{R_2}$, hence
  $r_2 = r_2'$ (Theorem~\ref{thm:grain-uniqueness}).
\end{enumerate}
Therefore $(r_1, r_2) = (r_1', r_2')$, establishing
$G_{cand}(1,2) \leg Res$. By symmetry, also $G_{cand}(2,1) \leg Res$.

By Theorem~\ref{thm:grain-equality}, $G_{cand}(i,j) \eqg Res$
for both labelings.

\paragraph{Verifying the bounds.}
\emph{Lower bound $(\grain{R_1} \times \grain{R_2}) \leg Res$:}\quad
$Res \eqg \grain{R_1} \tun G_2^{rest} \subt
\grain{R_1} \tun \grain{R_2} \subt \grain{R_1} \times \grain{R_2}$
(since $G_2^{rest} \subt \grain{R_2}$, and $\tun \subt \times$ because
the union merges shared fields while the product keeps both copies).
By Theorem~\ref{thm:grain-subset},
$(\grain{R_1} \times \grain{R_2}) \leg Res$.
The product $\times$ coincides with the canonical glb ($\tun$)
when the two grains share no fields in~$Res$, i.e., when $J_k$
contains no grain fields from either input.

\emph{Upper bound $Res \leg (\grain{R_1} \tin \grain{R_2})$:}\quad
Since $\grain{Res}$ contains all of $\grain{R_1}$ (by the
candidate-grain isomorphism), $Res \leg R_1$. Since
$\grain{R_1} \tin \grain{R_2} \subt \grain{R_1}$,
Theorem~\ref{thm:grain-subset} gives
$R_1 \leg (\grain{R_1} \tin \grain{R_2})$.
By transitivity, $Res \leg (\grain{R_1} \tin \grain{R_2})$.
\end{proof}

\begin{proof}[Proof of Theorem~\ref{thm:equijoin-grain} (Grain of an Equi-Join)]
By Theorem~\ref{thm:equijoin-candidates} each candidate is a
sub-product of $Res$ isomorphic to it; the grain is the irreducible
one (Definition~\ref{def:grain}). Consider
$G_{cand}(1,2) = \grain{R_1} \tun G_2^{rest}$ with
$G_2^{rest} = \grain{R_2} \tdiff J_k$; its fields are those of
$\grain{R_1}$---split as $G_1^{rest}$ and $G_1^{J_k}$---together with
$G_2^{rest}$.

A field of $G_1^{rest}$ or $G_2^{rest}$ is never redundant: it is a
non-$J_k$ field of an irreducible grain, recovered neither from the
remaining fields of that grain nor, lying outside $J_k$, through the
join. A field $f \in G_1^{J_k}$ can be recovered only via the
opposite grain $\grain{R_2}$ through the join, which requires the
candidate minus~$f$ to still contain all of
$\grain{R_2} = G_2^{rest} \tun G_2^{J_k}$. Since the candidate's
$J_k$-fields are exactly $G_1^{J_k}$, this holds for some such~$f$
iff $G_2^{J_k} \subsetneq G_1^{J_k}$. Hence
\[
G_{cand}(1,2)\ \text{is reducible} \iff
G_2^{J_k} \subsetneq G_1^{J_k},
\]
and symmetrically $G_{cand}(2,1)$ is reducible iff
$G_1^{J_k} \subsetneq G_2^{J_k}$.

\paragraph{Canonical labeling ($G_1^{J_k} \subt G_2^{J_k}$).}
Then $G_2^{J_k} \subsetneq G_1^{J_k}$ fails, so $G_{cand}(1,2)$ is
irreducible; with $G_{cand}(1,2) \subt Res$ and
$G_{cand}(1,2) \eqg Res$ (Theorem~\ref{thm:equijoin-candidates}) it
satisfies Definition~\ref{def:grain}:
$\grain{Res} = G_{cand}(1,2) = \grain{R_1} \tun (\grain{R_2} \tdiff J_k)$.
If the inclusion is strict, $G_{cand}(2,1)$ is reducible---it
properly contains the grain $G_{cand}(1,2)$.

\paragraph{Incomparable case.}
Neither $G_1^{J_k} \subsetneq G_2^{J_k}$ nor
$G_2^{J_k} \subsetneq G_1^{J_k}$ holds, so both candidates are
irreducible; each is a grain of $Res$, and the two are isomorphic
(Theorem~\ref{thm:multiple-grains}).
\end{proof}

\begin{proof}[Proof of Proposition~\ref{prop:join-special-cases} (Join Special Cases)]
Each case is a specialization of Theorem~\ref{thm:equijoin-grain}.

\emph{(1) Equal grains ($R_1 \eqg R_2$).}\quad
If the join key covers both grains ($\grain{R_1} \subt J_k$ and
$\grain{R_2} \subt J_k$), then $\grain{R_1} \tdiff J_k =
\grain{R_2} \tdiff J_k = \emptyset$, so by
Theorem~\ref{thm:equijoin-grain}
$\grain{Res} = \grain{R_1} \tun (\grain{R_2} \tdiff J_k)
= \grain{R_1} \tun \emptyset = \grain{R_1}$; by symmetry
$Res \eqg R_2$.

\emph{(2) Ordered grains ($R_1 \leg R_2$, finer $R_1$).}\quad
If $\grain{R_2} \subt \grain{R_1}$, then $\grain{R_2} \subt J_k$
implies $\grain{R_2} \tdiff J_k = \emptyset$. Hence
$\grain{Res} = \grain{R_1} \tun \emptyset = \grain{R_1}$---the finer
grain is preserved.

\emph{(3) Incomparable grains ($R_1 \incg R_2$).}\quad
The unified formula applies directly, with $R_1$ assigned to the input
with the smaller $J_k$-grain-portion (the labeling scope argument above
applies, since incomparable grains do not guarantee incomparable
$J_k$-portions). When $\grain{R_2} \tdiff J_k \neq \emptyset$, the
result grain $\grain{R_1} \tun (\grain{R_2} \tdiff J_k)$ is strictly
finer than either input---the source of fan traps
(Section~\ref{sec:errors}).

\emph{(4) Natural join ($J_k = R_1 \tin R_2$).}\quad
Substituting $J_k = R_1 \tin R_2$:
$\grain{Res} = \grain{R_1} \tun (\grain{R_2} \tdiff (R_1 \tin R_2))
= \grain{R_1} \tun (\grain{R_2} \tdiff R_1)$,
capturing all of $\grain{R_1}$ plus grain fields of $R_2$ that lie
outside $R_1$.
\end{proof}

\begin{proof}[Proof of Theorem~\ref{thm:generalized-equijoin} (Generalized Equi-Join)]
When join key types satisfy $J_{k1} \eqg J_{k2}$ via isomorphism
$\phi : J_{k1} \stackrel{\cong}{\longrightarrow} J_{k2}$, the join matches
rows where $\phi(\pi_1(r_1)) = \pi_2(r_2)$.

\emph{Reduction to Theorem~\ref{thm:equijoin-candidates}.}\quad
Let $J_k$ be a canonical representative of the isomorphism class.
Via $\phi$, we can treat both join keys as $J_k$. The type-level
operations $\tun$ and $\tdiff$ in
$Res \eqg \grain{R_1} \tun (\grain{R_2} \tdiff J_k)$
are replaced by their grain-level counterparts (operating modulo the
isomorphism $\phi$). The bootstrapping argument of the proof of
Theorem~\ref{thm:equijoin-candidates} carries over unchanged.

When $J_{k1} = J_{k2} = J_k$, the isomorphism $\phi$ is the identity,
the grain-level operations reduce to type-level operations, and the
formula recovers Theorem~\ref{thm:equijoin-candidates} exactly.
\end{proof}

\paragraph{Grain Inference for Other Relational Operations.}
The following proofs establish the rules in Table~\ref{tab:ra-inference}.

\begin{proof}[Selection $\sigma_\theta$]
Selection filters rows without changing the type: $Res = R$.
Since grain is a type-level property and $Res = R$,
$\grain{Res} = \grain{R}$. At the data level, any subset of a
collection retains unique identification by the same grain.
\end{proof}

\begin{proof}[Projection $\pi_S$]
$Res = S$ where $S \subt R$.
If $\grain{R} \subt S$, all grain fields survive: $\grain{Res} = \grain{R}$.
If $\grain{R} \not\subt S$, some grain fields are removed; without
additional structural information, $\grain{Res} = S$ in the worst case (duplicate
elimination may be needed).
\end{proof}

\begin{proof}[Extension by $D = f(R)$]
$Res = R \times D$. Since $D = f(R)$ is determined by~$R$, and
$\grain{R}$ determines all of~$R$ (Theorem~\ref{thm:grain-uniqueness}),
$\grain{R}$ also determines~$D$. For any two elements $(r_1, d_1)$ and
$(r_2, d_2)$: if $\textit{grain}\;r_1 = \textit{grain}\;r_2$, then
$r_1 = r_2$, so $d_1 = f(r_1) = f(r_2) = d_2$.
Therefore $\grain{Res} = \grain{R}$.
\end{proof}

\begin{proof}[Rename $\rho_{a \to b}$]
Renaming changes attribute names but not structure or values, establishing
a bijection between elements. If $\grain{R} = A_1 \times \cdots \times A_n$,
then $\grain{Res} = A_1' \times \cdots \times A_n'$ where
$A_i' = b$ if $A_i = a$, else $A_i' = A_i$. The bijection preserves
uniqueness, so $Res \eqg R$.
\end{proof}

\begin{proof}[Grouping $\gamma_{G_c, \textit{agg}}$]
$Res = G_c \times \textit{AggResult}$. Grouping produces exactly one row per
unique $G_c$-value combination, so $G_c$ uniquely identifies result rows.
The result grain is $\grain{G_c}$: if some grouping columns are
functionally dependent on others (e.g., \textit{DepartmentId} $\to$
\textit{DepartmentName}), a proper subset suffices. When all grouping
columns are mutually independent, $\grain{G_c} = G_c$.
\end{proof}

\begin{proof}[Set operations $\cup$, $\cap$, $-$]
These require union-compatible types ($R_1 = R_2 = R$), so $Res = R$.
Since the result type equals the input type,
$\grain{Res} = \grain{R}$.
\end{proof}

\begin{proof}[Cross product and theta join $\bowtie_\theta$]
$Res = R_1 \times R_2$ (full product structure; $\theta$ filters but does
not create equality constraints between fields). By
Theorem~\ref{thm:grain-product}, $\grain{Res} = \grain{R_1} \times \grain{R_2}$.
Unlike equi-joins, no bootstrapping occurs without equality constraints.
Note: when $\theta$ is an equality on computed keys ($f(r_1) = g(r_2)$),
the operation can be decomposed into extensions followed by an equi-join,
yielding a potentially coarser grain (Section~\ref{subsec:generalized-join}).
\end{proof}

\begin{proof}[Semi-join $\ltimes$ and anti-join $\rhd$]
Both return rows from~$R_1$ only: $Res = R_1$. Semi-join keeps rows with
a match in~$R_2$; anti-join keeps those without. Both are equivalent to
a selection on $C\;R_1$, so by the selection rule,
$\grain{Res} = \grain{R_1}$.
\end{proof}

\subsection{Grain Axiom Completeness (Section~\ref{subsec:rw-fd})}

\begin{proof}[Proof of Proposition~\ref{prop:grain-completeness} (Completeness of Grain Axioms)]
We show that every grain ordering valid in all models is derivable from
A1--A9.

For product types $R = A_1 \times \cdots \times A_n$, grains are
sub-products: $\grain{R} = A_{i_1} \times \cdots \times A_{i_k}$. The grain
ordering $R_1 \leg R_2$ holds when $\grain{R_1}$ functionally determines
$\grain{R_2}$ (Definition~\ref{def:grain-ord}). Identifying each sub-product
with its index set, this is precisely functional determination of attribute
sets.

Under this correspondence:
\begin{itemize}
\item A1 (self-determination) maps to the FD reflexivity axiom
  ($X \to X$).
\item A2 (reflexivity from $\subt$) maps to the FD reflexivity rule
  ($Y \subseteq X \Rightarrow X \to Y$).
\item A3 (augmentation) maps to Armstrong's augmentation
  ($X \to Y \Rightarrow XZ \to YZ$).
\item A4 (transitivity) maps to Armstrong's transitivity
  ($X \to Y, Y \to Z \Rightarrow X \to Z$).
\end{itemize}

A5--A9 are derivable from A1--A4 (as shown in the soundness argument
above), just as the corresponding FD rules are derivable from Armstrong's
three axioms.

Armstrong~\cite{armstrong1974dependency} proved that these axioms are
complete for functional dependencies: every FD valid in all relations
satisfying a given set of FDs is derivable. Since grain ordering on
product types reduces to FD reasoning on the index sets of grain
components, Armstrong's completeness transfers directly. Since all types
in this paper are product types (Section~\ref{sec:foundations}),
completeness follows.
\end{proof}

\subsection{Grain Projection as a Homomorphism (Section~\ref{sec:grain-factorization})}


\begin{proof}[Proof of Theorem~\ref{thm:grain-factorization} (Grain Homomorphism)]
Given any function $h : R_1 \to R_2$ with grain lift
$\gl{h} = \textit{grain}_{R_2} \circ h \circ f_{g_{R_1}}$
(Definition~\ref{def:grain-lift}), we verify the homomorphism
condition $\textit{grain}_{R_2} \circ h = \gl{h} \circ
\textit{grain}_{R_1}$:
\[
  \gl{h} \circ \textit{grain}_{R_1}
  = \textit{grain}_{R_2} \circ h \circ
    \underbrace{f_{g_{R_1}} \circ \textit{grain}_{R_1}}_{\mathrm{id}_{R_1}}
  = \textit{grain}_{R_2} \circ h. \quad\checkmark
\]
\end{proof}

\begin{proof}[Proof of Corollary~\ref{cor:grain-compositionality} (Grain Compositionality)]
Let $h_1 : R_1 \to R_2$ and $h_2 : R_2 \to R_3$ be composable
transformations. By Definition~\ref{def:grain-lift}, their grain lifts
are:
\begin{align*}
  \gl{h_1} &= \textit{grain}_{R_2} \circ h_1 \circ f_{g_{R_1}} \\
  \gl{h_2} &= \textit{grain}_{R_3} \circ h_2 \circ f_{g_{R_2}} \\
  \gl{h_2 \circ h_1} &= \textit{grain}_{R_3} \circ (h_2 \circ h_1)
    \circ f_{g_{R_1}}
\end{align*}
Composing the first two:
\begin{align*}
  \gl{h_2} \circ \gl{h_1}
  &= (\textit{grain}_{R_3} \circ h_2 \circ f_{g_{R_2}})
     \circ (\textit{grain}_{R_2} \circ h_1 \circ f_{g_{R_1}}) \\
  &= \textit{grain}_{R_3} \circ h_2 \circ
     \underbrace{(f_{g_{R_2}} \circ \textit{grain}_{R_2})}_{\mathrm{id}_{R_2}}
     \circ\; h_1 \circ f_{g_{R_1}} \\
  &= \textit{grain}_{R_3} \circ h_2 \circ h_1 \circ f_{g_{R_1}} \\
  &= \gl{h_2 \circ h_1}
\end{align*}
The key step is that $f_{g_{R_2}} \circ \textit{grain}_{R_2} =
\mathrm{id}_{R_2}$: the grain function composed with its inverse is
the identity. Thus intermediate types cancel, and the grain lift
of the composition equals the composition of the grain lifts.
\end{proof}

\subsection{CalcG and Error Detection (Sections~\ref{sec:calcg}--\ref{sec:errors})}

\begin{proof}[Proof of Theorem~\ref{thm:calcg-correctness} (CalcG Correctness and Complexity)]
Let $P = (V, E)$ be a pipeline DAG with vertices in topological order
$v_1, v_2, \ldots, v_{|V|}$.

\emph{Correctness (by induction on topological order).}\quad
\emph{Base case:} Source nodes $v$ have no predecessors; their grain
$\grain{R_v}$ is given by the user's grain annotation. CalcG uses this
annotation directly, which is correct by assumption.

\emph{Inductive step:} Assume CalcG has correctly computed grains for all
predecessors of~$v$. Vertex~$v$ applies operation $\textit{op}_v$ to inputs
with grains $\grain{R_1}, \ldots, \grain{R_k}$ (correctly computed by the
induction hypothesis). CalcG applies the corresponding inference rule from
Table~\ref{tab:ra-inference}. Each rule is sound (proven above for all
relational algebra operations), so the inferred grain $\grain{Res_v}$ is
correct.

\emph{Decidability.}\quad
Every inference rule involves finite type-level set operations ($\tin$,
$\tdiff$, $\tun$) and decidable predicates ($\subt$, $=$) on finite
field sets. Since each step is decidable
and the DAG is finite, CalcG terminates and is decidable.

\emph{Complexity.}\quad
CalcG performs one pass over the DAG in topological order: $|V|$
vertices. At each vertex, it applies one inference rule involving at most
$k$ input grains, where $k$ is the maximum in-degree (typically
$k \leq 2$ for binary operations). Each rule performs a constant number
of field-set operations ($\tin$, $\tdiff$, $\tun$, $\subt$), each
costing $O(|F|)$ where $|F|$ is the maximum number of fields in any
type. Total time: $O(|V| \cdot k \cdot |F|)$.
Space: one grain value (a field set) stored per vertex: $O(|V| \cdot |F|)$.
\end{proof}

\begin{proof}[Proof of Corollary~\ref{cor:data-independent} (Data-Independent Verification)]
By Theorem~\ref{thm:calcg-correctness}, CalcG correctly computes the grain
at every vertex. Checking $\text{CalcG}[P](\grain{R_S}) = \grain{R_T}$
requires only the grain annotations (type-level schema information) and
the pipeline structure---no data access. The verification cost is
$O(|V| \cdot k \cdot |F|)$ in schema size, entirely independent of data size:
a data-independent verification.
\end{proof}

\begin{proof}[Proof of Corollary~\ref{cor:grain-implication} (Grain Implication on Product Types is FD Implication)]
We exhibit a linear reduction to functional-dependency implication
and invoke the classical linear-time algorithm of
\cite{beeri1979computational}.

\emph{Reduction $\Gamma \mapsto \Sigma$.}\quad
Let $\mathcal{A} = \bigcup_{R \in \mathcal{U}} \grain{R}$ be the
attribute universe---the union of grain field-sets over the type
universe, finite by assumption. For each $\leg$-assertion
$R_1 \leg R_2$ in $\Gamma$, emit the FD
$\grain{R_1} \to \grain{R_2}$ on~$\mathcal{A}$. For each
$\eqg$-assertion $R_1 \eqg R_2$, emit
$\grain{R_1} \to \grain{R_2}$ and $\grain{R_2} \to \grain{R_1}$.
For each absolute assertion $\grain{R} = T$, emit $T \to \grain{R}$
and $\grain{R} \to T$ (relabeling). Each grain assertion contributes
at most two FDs, each of size $O(|F|)$ where $|F|$ is the maximum
field-set size; for a fixed field universe,
$|\Sigma| = O(|\Gamma|)$. The target $\sigma$ is translated to an FD
or pair of FDs $\sigma'$ of size $O(|\sigma|)$ by the same rules.

\emph{Soundness and completeness of the reduction.}\quad
By Proposition~\ref{prop:grain-completeness}, on product types
$R_1 \leg R_2$ holds iff $\grain{R_1}$ functionally determines
$\grain{R_2}$ on the attribute universe. The grain axiom system
A1--A9 is sound and complete for this correspondence. Hence
$\Gamma \models \sigma$ in grain-ordering semantics iff
$\Sigma \models \sigma'$ in FD-implication semantics.

\emph{Linear-time decision.}\quad
$\Sigma \models \sigma'$ is decidable in time $O(|\Sigma| + |\sigma'|)$
via attribute-closure computation~\cite{beeri1979computational}.
Composing with the linear reduction yields
$O(|\Gamma| + |\sigma|)$ for grain implication.
\end{proof}

\begin{proof}[Proof of Proposition~\ref{prop:fan-trap} (Fan Trap Characterization)]
Given the equi-join of $C\;R_1$ and $C\;R_2$ on~$J_k$ producing
$C\;Res$, suppose $Res \stg R_i$ for some~$i$.

By Definition~\ref{def:grain-ord}, $Res \stg R_i$ means
$R_i \leg Res$ (there exists a function
$f : \grain{R_i} \to \grain{Res}$) but $Res \not\leg R_i$ (no function
$\grain{Res} \to \grain{R_i}$ exists, equivalently $f$ is not surjective).

At the data level: the grain ordering $R_i \leg Res$ means each element
of~$R_i$ maps to one or more elements of~$Res$ (via the join). The strict
ordering ($\not\eqg$) means this mapping is not bijective---some $R_i$
elements map to \emph{multiple} $Res$ elements. Each such duplication
multiplies the contribution of that $R_i$ row to any aggregate, inflating
metrics computed over~$R_i$.

Detection is a data-independent schema check: compute $\grain{Res}$ via
Theorem~\ref{thm:equijoin-grain} and test
$Res \stg R_i$ using type-level subset comparison
(Corollary~\ref{cor:data-independent}).
\end{proof}


\section{Deferred Remarks and Examples}
\label{sec:deferred-remarks}

This appendix collects remarks and examples deferred from the body for space.

\subsection{Grain Inference (Section~\ref{sec:inference})}

\begin{remark}[Informational Independence]
\label{rem:info-independence}
The two equi-join candidates of
Theorem~\ref{thm:equijoin-candidates} both determine $Res$; they
differ only in whether a candidate carries a \emph{redundant}
$J_k$-field. As the proof of Theorem~\ref{thm:equijoin-grain} shows,
such a field is redundant exactly when the candidate also contains
the opposite input's grain in full, so the join recovers it from the
other side. The canonical labeling $G_1^{J_k} \subt G_2^{J_k}$ is the
choice that rules this out: informally, it keeps the candidate's
components \emph{informationally independent}---none recoverable from
the others.
\end{remark}

\section{Practical Impact and Cost Analysis (Non-Normative)}
\label{sec:app-practical-impact}

\emph{This appendix preserves practical, engineering-oriented material
from an earlier version of this work. It is \textbf{non-normative}: none
of the formal development in the main text depends on it, and it makes no
claim to the rigor of the theorems above. It is retained here, in the
unbounded space of the preprint, for readers interested in the economic
motivation and the implementation spectrum of grain-aware typing.}

\subsection{Encoding Grain in the Type System}
\label{sec:app-encoding}

Grain verification involves two complementary aspects: \emph{grain
computation} (determining what grain a transformation produces) and
\emph{grain enforcement} (ensuring data is used correctly given its
grain). Grain computation applies the inference rules of
Section~\ref{sec:inference} (Table~\ref{tab:ra-inference}) using set
operations on field names; it can be implemented in any language and
answers ``what grain does this transformation produce?'' Grain
enforcement, in contrast, encodes grain in the type system itself, so
that the compiler catches functions receiving data of the wrong grain;
it answers ``is this data being used correctly given its grain?''
Together they enable zero-cost verification: computation happens at
build time without data processing, while enforcement prevents grain
mismatches before execution.

The power of grain-aware type checking depends on the expressiveness of
the underlying type system. We identify a \emph{progressive power
spectrum} across language categories:

\begin{enumerate}
    \item \textbf{No type support} (SQL): grain checking requires
    external static analyzers operating on comment annotations or
    metadata~\cite{chaudhuri2014automated}; no compile-time guarantees
    from the language itself.

    \item \textbf{Declarative metadata} (dbt): grain definitions in YAML
    configurations enable compile-time macro validation~\cite{dbt2024};
    custom tests enforce grain constraints before query execution.

    \item \textbf{Gradual typing with generics} (Python/PySpark):
    parameterized types like \texttt{GrainedDataFrame[G]} enable
    optional static checking via tools such as mypy~\cite{mypy2024},
    with runtime validation as a safety net. Gradual
    typing~\cite{siek2006gradual} allows incremental adoption of type
    annotations.

    \item \textbf{Strong static typing} (Haskell, Scala): phantom type
    parameters, type families~\cite{peytonjones2006glasgow}, and
    GADTs~\cite{peytonjones2003gadts} enable compile-time grain
    computation with zero runtime overhead; Scala's type
    system~\cite{odersky2014scala} provides similar capabilities through
    higher-kinded types.

    \item \textbf{Dependent types} (Lean~4, Agda): grain relations such
    as \texttt{IsGrainOf G R} carry mathematical proofs of isomorphism,
    enabling formal verification with machine-checkable
    guarantees~\cite{demoura2021lean4,agda2024}.
\end{enumerate}

The key insight is that \emph{any} type system supporting parameterized
types can encode grain at the type level. More expressive systems add
formal proofs, but even simple generics (Python's \texttt{Generic[G]})
catch grain mismatches before runtime. Production systems can thus use
practical encodings (Python, Scala) while formal verification uses
rigorous systems (Lean~4, Agda); concrete implementations across all
three approaches accompany the artifacts of
Section~\ref{sec:verification}.

\subsection{The Economics of Type-Level Verification}
\label{sec:app-economics}

Because grain inference (Section~\ref{sec:inference}) and the
\textsc{CalcG} algorithm (Section~\ref{sec:calcg}) verify a pipeline
without scanning any data, verification cost is dominated by human
review rather than compute. Consider an enterprise platform with 500
pipelines averaging 20--50 transformation nodes, source tables of
$10^8$--$10^{10}$ rows, cloud compute of \$5--\$50 per full pipeline
run, 3--7 debugging iterations per grain bug, and engineer time at
\$100--\$200/hour.

\paragraph{Traditional debugging cost per grain bug.}
\$15--\$350 in compute (3--7 iterations $\times$ \$5--\$50) plus 4--12
hours of engineering time (\$400--\$2{,}400), totaling
\textbf{\$415--\$2{,}750 per bug} when caught during testing. If the bug
reaches production, add incident response (\$5{,}000--\$50{,}000) plus
business impact from incorrect decisions.

\paragraph{Type-level verification cost.}
Zero compute (analysis runs at design time on schema metadata) plus
5--15 minutes of human review (\$8--\$50) to confirm grain annotations
and any machine-checked proof, totaling \textbf{\$8--\$50 per pipeline}.
Across 500 pipelines this is \$4{,}000--\$25{,}000 of verification
effort versus \$207{,}500--\$1{,}375{,}000 in potential debugging cost
at one grain bug per pipeline---a \textbf{98--99\% reduction}. The
advantage is robust to the assumed bug rate: even at 0.25 bugs per
pipeline the saving is 92--97\%, rising to 99\%+ at two bugs per
pipeline.

\paragraph{The AI-era angle.}
Historically, formal verification required expertise with proof
assistants, limiting adoption to academia. Large language models can now
\emph{generate} grain-correctness proofs from inference formulas and
pipeline specifications, shifting the human role from \emph{writing}
proofs (high barrier) to \emph{verifying} them (low barrier)---confirming
that a generated proof type-checks and matches the intended semantics.
Combined with the design-time guarantees of the main text, this makes
machine-checkable correctness for AI-generated pipelines practical
rather than aspirational.

\end{document}